\documentclass{article}
\usepackage[margin=.75in]{geometry}

\usepackage{cite}
\usepackage{amsmath}
\usepackage{amssymb}
\usepackage{amsfonts}
\usepackage{algorithmic}
\usepackage{graphicx}
\usepackage{textcomp}
\usepackage{xcolor}
\usepackage{bm}
\usepackage{hyperref}
\hypersetup{
  colorlinks,
  linkcolor={green!60!black},
  citecolor={red!70!black},
  urlcolor={blue!70!black}
}

\usepackage{xurl}

% ===============================
%    ZZH's LaTeX macros
%    2024-11-28
% ===============================

% some packages

\usepackage{amsmath}
\usepackage{amssymb}
\usepackage{amsfonts}
\usepackage{amsthm}
\usepackage{array}
\usepackage{enumerate}

\newif\ifone
\newif\iftwo

\usepackage{bm}
\usepackage[caption=false,font=normalsize,labelfont=sf,textfont=sf]{subfig}
\usepackage{textcomp}

\usepackage{verbatim}
\usepackage{graphicx}

\usepackage{silence}
\usepackage{float}

\usepackage{soul}
\usepackage{color}
\usepackage{xcolor}
\usepackage{booktabs}

\usepackage{balance}

\usepackage[utf8]{inputenc} 
\usepackage[T1]{fontenc}
\usepackage{url}
\usepackage{ifthen}
\usepackage{cite}

\allowdisplaybreaks

% some environments
% need package: asthma
\newtheorem{theorem}{Theorem}
\newtheorem{lemma}[theorem]{Lemma}
\newtheorem{corollary}[theorem]{Corollary}

\newtheorem{definition}{Definition} 
\newtheorem{remark}{Remark}

\newtheorem*{openproblem}{Open Problem}

%中文证明
%\renewcommand{\proofname}{\indent\textbf{证明}}
%\renewcommand{\qedsymbol}{$\blacksquare$}

%\interdisplaylinepenalty=2500
%\hyphenation{op-tical net-works semi-conduc-tor}

% personal notes

\newcommand{\smat}{\ \ }

\newlength{\spacelen}

\newcommand{\rk}[1]{\textnormal{rank}\left(#1\right)}
\newcommand{\nz}[1]{\textnormal{nz}\left(#1\right)}
\newcommand{\Angle}[1]{\left\langle#1\right\rangle}

% Some useful symbols
\newcommand{\ff}{\mathbb{F}}
\newcommand{\Fq}{\mathbb{F}_q}

\newcommand{\bO}{\mathbf{O}}

\newcommand{\cC}{\mathcal{C}}

\newcommand{\cM}{\mathcal{M}}
\newcommand{\cN}{\mathcal{N}}

\newcommand{\cR}{\mathcal{R}}
\newcommand{\cS}{\mathcal{S}}
\newcommand{\cT}{\mathcal{T}}

\newcommand{\mA}{A}
\newcommand{\mB}{B}

\newcommand{\mL}{L}

\newcommand{\mO}{O}

\newcommand{\mR}{R}

\newcommand{\ve}{\bm{e}}

\newcommand{\vu}{\bm{u}}
\newcommand{\vv}{\bm{v}}

\newcommand{\vlmd}{\bm{\lambda}}

\newcommand{\sG}{\mathcal{G}}

\newcommand{\sM}{\mathcal{M}}

\newcommand{\sR}{\mathcal{R}}
\newcommand{\sS}{\mathcal{S}}
\newcommand{\sT}{\mathcal{T}}

\DeclareMathOperator{\tB}{BW}
\DeclareMathOperator{\tO}{IO}

\newcommand{\ceil}[1]{\left\lceil #1 \right\rceil}
\newcommand{\floor}[1]{\left\lfloor #1 \right\rfloor}

\DeclareMathOperator{\MOD}{mod}

\newcommand{\RK}[1]{\textnormal{rank} #1}

\begin{document}

\title{Optimal Repair of $(k+2, k, 2)$ MDS Array Codes
  \thanks{This work is partially supported by
    the National Key R\&D Program of China (Grant No. 2021YFA1001000),
    the National Natural Science Foundation of China (Grant No. 12231014), and
    the Postdoctoral Fellowship Program and China Postdoctoral Science Foundation (Grant No. BX20250065).

    Zihao Zhang and Guodong Li are with the Key Laboratory of Cryptologic Technology and Information Security, Ministry of Education, and the School of
    Cyber Science and Technology, Shandong University, Qingdao, Shandong 266237,
    China (e-mail: zihaozhang@mail.sdu.edu.cn, guodongli@sdu.edu.cn). Sihuang Hu is
    with the Key Laboratory of Cryptologic Technology and Information Security,
    Ministry of Education, and the School of Cyber Science and Technology, Shandong
    University, Qingdao, Shandong 266237, China, and also with the Quan Cheng
    Laboratory, Jinan 250103, China (e-mail: husihuang@sdu.edu.cn).
  }
}
\author{Zihao Zhang, Guodong Li, and Sihuang Hu}
\date{}
\maketitle
\begin{abstract}

    Maximum distance separable (MDS) codes are widely used in distributed storage systems as they provide optimal fault tolerance for a given amount of storage overhead.
    The seminal work of Dimakis~\emph{et al.} first established a lower bound on the repair bandwidth for a single failed node of MDS codes, known as the \emph{cut-set bound}. MDS codes that achieve this bound are called minimum storage regenerating (MSR) codes. Numerous constructions and theoretical analyses of MSR codes reveal that they typically require exponentially large sub-packetization levels, leading to significant disk I/O overhead. To mitigate this issue, many studies explore the trade-offs between the sub-packetization level and repair bandwidth, achieving reduced sub-packetization at the cost of suboptimal repair bandwidth. Despite these advances, the fundamental question of determining the minimum repair bandwidth for a single failure of MDS codes with fixed sub-packetization remains open.

    In this paper, we address this challenge for the case of two parity nodes ($n-k=2$) and sub-packetization $\ell=2$. Under these parameters, we establish a correspondence between repair schemes and point sets on the projective line \(\mathbb{P}^1\), and then derive a lower bound on repair bandwidth utilizing the sharply 3-transitive action of \(\text{PGL}_2(\Fq)\). Furthermore, we extend this lower bound to the repair I/O, and construct two classes of explicit MDS array codes that achieve these bounds, offering practical code designs with provable repair efficiency.
\end{abstract}
% !TEX root = ../main.tex

\section{Introduction}\label{sec:intro}

Erasure codes are widely adopted in distributed storage systems as they provide better storage efficiency for the same level of fault tolerance compared to replication~\cite{weatherspoon02}. Consider an $(n,k)$ erasure-coded distributed storage system. For a file of size $\cM$ bits to be stored, it is first divided into $k$ \emph{data packets}, denoted by $C_1, C_2, \dots, C_k$, each of size $\cM/k$. Then, using the erasure code, $r = n - k$ \emph{parity packets} are generated, denoted by $C_{k+1}, C_{k+2}, \dots, C_n$, each also of size $\cM/k$. Finally, these $n$ data and parity packets are stored on $n$ distinct storage \emph{nodes}. In distributed system literature, these $n$ data and parity packets are called a \emph{stripe}. If a packet in a stripe is lost, it can be recovered by accessing some remaining data or parity packets.

\noindent\textbf{Reed-Solomon (RS) codes.}
Reed-Solomon (RS) codes are the most prevalent class of erasure codes. They are widely deployed in distributed storage systems (e.g., Google~\cite{ford10}, Facebook~\cite{muralidhar14}) because they support general $(n,k)$ (where $n>k$) and achieve the optimal trade-off between storage overhead and fault tolerance. More precisely, a key advantage is their maximum distance separable (MDS) property, i.e., in an $(n, k)$ RS-coded stripe, any $k$ packets suffice to reconstruct the stripe.

Traditional RS code repair strategies incur high \emph{repair bandwidth}---the total amount of data transmitted over the network to recover failed nodes. Specifically, repairing up to $r$ failed packets in an RS-coded stripe requires transferring $k$ packets, which is equivalent to the size of the original file.

Fig.~\ref{pic:64} illustrates a $(6,4)$ RS-coded stripe. In this example, if any single packet fails, the repair process requires transferring all four remaining packets, resulting in a repair bandwidth of $\cM$ bits.

\begin{figure}[ht]
    \centering
\includegraphics[width=0.6\linewidth]{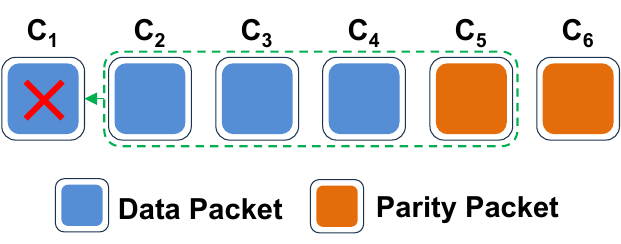}
    \caption{An $(n,k)=(6,4)$ RS-coded stripe.}
    \label{pic:64}
\end{figure}

\noindent\textbf{MSR codes.}
To reduce the repair bandwidth of RS codes, Dimakis~\emph{et al.} introduced the concept of minimum storage regenerating (MSR) codes, which minimize the repair bandwidth for a single failed node while maintaining the MDS property~\cite{dimakis10}. The minimum repair bandwidth of MDS codes is known as the \emph{cut-set bound}. The key idea behind MSR codes is to further partition each packet into $\ell$ \emph{sub-packets}, allowing for a more granular encoding process. The number of sub-packets in each packet $\ell$ is referred to as the \emph{sub-packetization}. MSR codes achieve reduced repair bandwidth by allowing each helper node to transmit only a portion of its sub-packets during the repair process.

Fig.~\ref{pic:648} illustrates a $(6,4,8)$ MSR-coded stripe, where each packet is divided into $\ell=8$ sub-packets. In this example, if any single packet is lost, the repair process requires transferring four sub-packets from each of the remaining five packets; the resulting repair bandwidth is twenty sub-packets. Since each packet is divided into eight sub-packets, the repair bandwidth is equivalent to 2.5 packets, which significantly saves $37.5\%$ repair bandwidth compared to the (6, 4) RS-coded stripe in Fig.~\ref{pic:64}.

\begin{figure}[ht]
    \centering
    \includegraphics[width=0.6\linewidth]{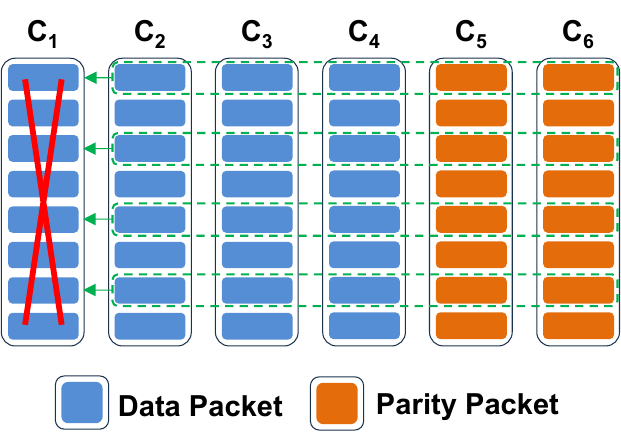}
    \caption{An $(n,k,\ell)=(6,4,8)$ MSR-coded stripe.}
    \label{pic:648}
\end{figure}

Since the introduction of MSR codes, numerous MSR code constructions have been proposed. Product-Matrix (PM) MSR codes~\cite{rashmi11} and PM-RBT~\cite{rashmi15} proposed by Rashmi \emph{et al.} are the first explicit MSR codes, but they have high storage overhead ($n \geq 2k-1$). Functional MSR (F-MSR) codes~\cite{hu12} support $n-k=2$ and $n-k=3$, yet they are non-systematic, meaning original data cannot be directly read from the stripes. Butterfly codes~\cite{pamiesJuarez16} are the first binary MSR codes, but they only support $n-k=2$. Subsequent MSR constructions supporting general $n$ and $k$ emerged, such as Ye-Barg codes~\cite{ye17,ye17a}, the generic transformation from MDS codes to MSR codes~\cite{li18}, and Clay codes~\cite{vajha18}. Notably, Clay codes~\cite{vajha18} have been deployed in the distributed storage system Ceph~\cite{weil06}. The above constructions only support $n-1$ helper nodes. Recently, Li~\emph{et al.} proposed a universal MSR code construction with the smallest sub-packetization that supports any number of helper nodes~\cite{li24}. A comprehensive review of MSR constructions can be found in the monograph~\cite{ramkumar22}.

Among known MSR codes, all systematic code constructions with low storage overhead require exponential sub-packetization levels $\ell = \exp(O(n))$. For example, The sub-packetization of an $(n,k)$ Clay code is $\ell = (n-k)^{\lceil n/(n-k) \rceil}$. Concurrently, multiple theoretical lower bounds~\cite{alrabiah19,balaji18,balaji22} demonstrate that exponential sub-packetization is unavoidable for high-rate systematic MSR codes.

Exponential sub-packetization severely hinders the practical adoption of MSR codes in distributed storage systems. While fine-grained partitioning of packets reduces repair bandwidth, it incurs high I/O overheads. Specifically, when the data and parity packets are divided into numerous sub-packets, the repair process of single failures requires helper nodes to transmit partial sub-packets to failed nodes, generating extensive non-contiguous disk I/O. For systems where disk I/O efficiency is the bottleneck, this significantly degrades performance.

As shown in Fig.~\ref{pic:648}, to repair the first packet in a $(6, 4, 8)$ MSR-coded stripe, each of the remaining five packets transmits four sub-packets, resulting in a total of 20 non-contiguous I/O operations on the stripe. This is a significant overhead compared to the four I/O operations required for repairing a single packet in the RS-coded stripe in Fig.~\ref{pic:64}.

\begin{figure}[ht]
    \centering
    \includegraphics[width=0.6\linewidth]{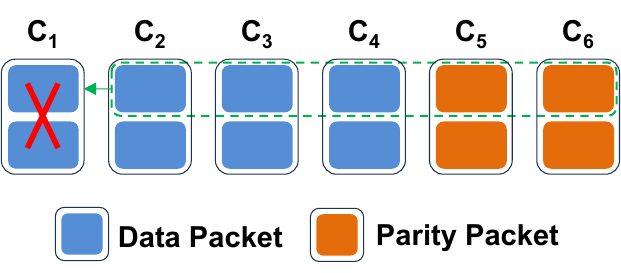}
    \caption{An $(n,k,\ell)=(6,4,2)$ MDS-coded stripe.}
    \label{pic:642}
\end{figure}

\noindent\textbf{MDS array codes with small sub-packetization.}
To reduce disk I/O overhead while maintaining low repair bandwidth, many works consider the trade-off between the exponential sub-packetization level of MSR codes and optimal repair bandwidth, focusing on MDS array codes with small sub-packetization. Such schemes preserve the MDS property like RS codes, while introducing reduced sub-packetization to achieve moderate repair bandwidth between that of RS codes and MSR codes.

As shown in Fig.~\ref{pic:642}, a $(6,4,2)$ MDS-coded stripe divides each packet into two sub-packets. In this example, if any single packet fails, the repair process requires transferring five (or six) sub-packets from the remaining five packets, resulting in a saving of 25-37.5\% repair bandwidth across all the data and parity packets compared to the RS-coded stripe in Fig.~\ref{pic:64}. Compared to the MSR-coded stripe in Fig.~\ref{pic:648}, this coding scheme achieves a suboptimal repair bandwidth without incurring non-contiguous disk I/O within one stripe.

The piggybacking framework~\cite{rashmi17} pioneered the construction of MDS codes with small sub-packetization and low repair bandwidth. This framework inherently retains the MDS property of RS codes while supporting arbitrary sub-packetization. Hitchhiker codes~\cite{rashmi14}, a specialized piggybacking design with $\ell = 2$, have been implemented in HDFS, reducing data packet repair bandwidth by 25\%--45\% compared to RS codes, though parity packet repair bandwidth remains unchanged. HashTag codes~\cite{kralevska18-TBG} similarly reduce repair bandwidth for data blocks. HashTag+ codes~\cite{kralevska18-DASC} further optimize repair bandwidth for both data and parity blocks. 
The Elastic Transformation~\cite{tang23} also converts RS codes into repair-efficient codes and supports configurable sub-packetization. Additionally, $\varepsilon$-MSR codes~\cite{rawat17,ramkumar25} demonstrate that for sufficiently large $n$, logarithmic sub-packetization suffices to achieve $(1+\varepsilon)$-times the optimal repair bandwidth. Other notable constructions with near-optimal bandwidth are proposed in~\cite{li21, li23MDS,cheng26,kadekodi23}.

Despite significant progress in constructing such MDS codes, the fundamental lower bound on repair bandwidth subject to a prescribed sub-packetization level remains largely unexplored.
The following open problem posed by Ramkumar~\emph{et al.} in their survey~\cite{ramkumar22} addresses this fundamental question.

\begin{openproblem}\cite[Open Problem 9]{ramkumar22}\label{op:ramkumar}
    Characterize the tradeoff between repair bandwidth, sub-packetization level, and field size for the general class of vector MDS codes.
\end{openproblem}

In~\cite{guruswami17}, Guruswami and Wootters first studied the repair problem of RS codes. By viewing each element of $\ff_{q^\ell}$ as an $\ell$-length vector over $\Fq$, they proposed repair schemes that outperform the naive approach and proved a lower bound on repair bandwidth that holds for any scalar MDS codes. 
Here, the degree of field extension $\ell$ is the sub-packetization.
Dau and Milenkovic~\cite{dau17} subsequently tightened this bound for RS codes.

More recently, attention has turned to the repair I/O of scalar MDS codes over $\ff_{q^\ell}$, defined as the total number of \(\Fq\)-elements accessed during repair.
Dau~\emph{et al.} \cite{dau18} provided the first nontrivial repair I/O lower bound for full-length RS codes over fields of characteristic $2$ with $2$ parity nodes. 
Subsequent works~\cite{li19io, liu25, liu24} have extended and refined these bounds for broader classes of RS codes.

\subsection{Our Contributions}

We investigate the special case of Open Problem~\ref{op:ramkumar} under fixed parameters $r = n-k = 2$ and $\ell = 2$.
Wu~\emph{et al.} addressed a variant of Open Problem~\ref{op:ramkumar} also in the special case of $n-k=2$ and $\ell = 2$; their work requires the codes to be degraded-repair-friendly~\cite{wu21}.
We impose no degraded-repair-friendly restriction.
The redundancy $r = 2$ is widely deployed in practice (e.g., RAID-6~\cite{blaum95} and Tencent Ultra-Cold Storage~\cite{intel24}.)
The sub-packetization $\ell = 2$ further ensures that, within one stripe, every packet is accessed contiguously during repair, eliminating non-contiguous I/O.

On the one hand, we give the theoretical limits on the repair efficiency of $(n = k+2, k, \ell = 2)$ MDS array code. Firstly, we consider the minimum repair bandwidth (denoted by $\beta_i$) of each packet $C_i$ ($1\le i\le n$) in one ($n = k+2, k, \ell=2$) MDS-coded stripe, where the repair bandwidth is the number of sub-packets transferred during the repair of $C_i$. Then, we consider the minimum repair I/O (denoted by $\gamma_i$) of each packet $C_i$ ($1\le i\le n$) in one ($n = k+2, k, \ell=2$) MDS-coded stripe, where the repair I/O is the number of sub-packets accessed on the helper packets during the repair of $C_i$. We provide the following lower bounds on the minimum repair bandwidth and the minimum repair I/O.
\begin{theorem}
    \label{thm:avg-min-bw}
    Let $\cC$ be a $(k+2, k,2)$ MDS array code and $\beta_i$ the minimal repair bandwidth for each packet $C_i$ (where $1\le i\le n$).
    Then the {\bf avg-min repair bandwidth} $\bar\beta(\cC) := \frac{1}{n} \sum_{i=1}^{n}\beta_i$ satisfies that
    \begin{equation*}
        \label{eq:avg-min-bw}
        \bar\beta(\cC) \geq \frac{5k}{4}.
    \end{equation*}
\end{theorem}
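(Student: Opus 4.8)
The plan is to set up a linear-algebraic framework for the repair of a single packet $C_i$ in a $(k+2,k,2)$ MDS array code and then lower-bound $\sum_i \beta_i$ by a counting/rank argument. Fix a generator description: each packet $C_i \in \ff_q^2$, and the code is the row space (or kernel) of a $2k \times 2n$ parity-check-type matrix whose $2\times 2$ blocks $H_{ji}$ encode how node $i$ participates in parity constraint $j$, for $j \in \{1,2\}$. A repair scheme for $C_i$ that downloads $\beta_i$ sub-packets total corresponds to choosing, for each helper $C_h$ ($h\ne i$), a subspace $S_h \subseteq \ff_q^2$ (the $\ff_q$-span of the downloaded combinations of the two sub-packets of $C_h$), with $\sum_{h\ne i}\dim S_h = \beta_i$, such that the downloaded data determines $C_i$ for every codeword. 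The MDS property forces that for each $i$ at most two helpers can contribute a $0$-dimensional subspace (equivalently $\beta_i \ge 2(k-1) + \text{(corrections)}$ is too weak); the real content is that $\beta_i \ge k$ always, and $\beta_i = k$ is only achievable in a constrained way — I would first prove $\beta_i \ge k$ for every $i$ (each of the $k-1$ "full" helpers downloads its full $2$ sub-packets would give $2(k-1)$; but one can do better by exploiting parity, so the honest floor is $\beta_i\ge k$, matching the cut-set-type bound with $\ell=2,\,r=2$).

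The heart of the argument is to show that $\beta_i = k$ cannot hold for too many $i$ simultaneously — specifically, that the number of indices $i$ with $\beta_i = k$ is at most $k$ (out of $n=k+2$), while every other index has $\beta_i \ge k+1$, wait — that gives $\bar\beta \ge \frac{k\cdot k + 2(k+1)}{k+2}$, which is not $\tfrac{5k}{4}$. So the correct accounting must be finer. I expect the true statement to be: writing $\beta_i \in \{k, k+1, k+2, \dots\}$ and calling $i$ \emph{optimal} if $\beta_i = k$, one shows that optimal repair of $C_i$ imposes a rigid structural constraint on the $2\times 2$ blocks in column $i$ and its interaction with the two parity rows — essentially that a certain $1$-dimensional "repair direction" in each parity space is used — and that these constraints for distinct optimal $i$ are mutually incompatible once more than roughly $n/2$ of them hold. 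Concretely I would argue: if $i$ is optimal then in the repair of $C_i$ exactly one parity constraint is used "with multiplicity $1$" at every helper and the other is unused, pinning down a vector $v_i$; a pair $i,i'$ of optimal indices forces $v_i, v_{i'}$ to satisfy an MDS-incompatibility unless they are equal, and at most ... — the precise combinatorial bound should yield that at most $\tfrac{n}{2}$ indices can be optimal, hence at least $\tfrac{n}{2}$ indices have $\beta_i \ge k+?$. Matching $\tfrac{5k}{4}=\tfrac14\bigl(k + 4k\bigr)$... rather, $\tfrac{5k}{4}\cdot n$ with $n = k+2$: the cleanest route is to prove the \textbf{per-pair} inequality $\beta_i + \beta_{i'} \ge \tfrac{5k}{2}$ on average over a suitable pairing, or the \textbf{global} inequality $\sum_i \beta_i \ge \tfrac{5k}{4}n$ directly by double counting over the two parity equations.

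So the concrete plan is: (1) reduce a repair scheme to the subspace data $(S_h)_{h\ne i}$ and translate "$(S_h)$ repairs $C_i$" into a full-rank condition on a block matrix $M_i(S)$ built from the $H_{jh}$'s; (2) derive the baseline $\beta_i \ge k$ from counting rows vs. the MDS non-degeneracy of every $k$-subset of block-columns; (3) characterize equality: $\beta_i = k$ forces, for all but one helper, $\dim S_h = 1$ with $S_h$ lying in a specific line determined by a single vector $u_i$ in the (quotient by one parity) $2$-space, and forces a compatibility between $u_i$ and the two parity blocks at node $i$; (4) show by an MDS/Vandermonde-type nondegeneracy argument that the structural witnesses $\{u_i : \beta_i = k\}$ cannot all coexist — at most a $\tfrac{?}{?}$ fraction of nodes admit such $u_i$ — and convert the count into $\sum_i\beta_i \ge \tfrac{5k}{4}n$; (5) conclude $\bar\beta(\cC) \ge \tfrac{5k}{4}$. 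The main obstacle, and where I expect to spend the real effort, is Step (4): pinning down the exact combinatorial obstruction between optimal-repair witnesses at different nodes and showing the constant works out to exactly $\tfrac54$ rather than something weaker — this likely requires a careful case analysis on how the two parity equations are "shared" among optimally-repairable nodes, possibly splitting into the cases where the optimal nodes are all data nodes, include a parity node, etc., and using the $\ell = 2$ rigidity (only two $1$-dimensional subspaces of a plane can be "generic" relative to a third) to cap the count.
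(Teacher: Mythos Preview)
Your framework is sound---repair matrices, subspace downloads, structural witnesses---but the accounting is off in a way that blocks the argument. First, the baseline is $\beta_i \ge k+1$, not $k$: with $r=\ell=2$ and repair degree $d=k+1$, the repair matrix $\mF_i$ satisfies $\mF_i\mH_j\ne 0$ for all $j$, so $\tB(\mF_i)=\sum_{j\ne i}\rk{\mF_i\mH_j}=k+|\cR(\mF_i)|$ where $\cR(\mF_i)=\{j:\rk{\mF_i\mH_j}=2\}\ni i$. So there are no ``optimal nodes with $\beta_i=k$'' to count, and the problem is really to show $\sum_{i\in[n]}|\cR(\mF_i)|\ge \tfrac{k(k+2)}{4}$. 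Second, your Step (3) posits a \emph{single} witness vector $u_i$; in fact each optimal $\mF_i$ is governed by a \emph{pair} of $1$-dimensional subspaces $(\langle\vu_i\rangle,\langle\vv_i\rangle)$ (one for each parity row), and the interaction between these pairs---not a simple exclusion---is what drives the bound.

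The key structural lemma you are missing is this: after ordering the nodes so that $\cR(\mF_1)\preceq\cdots\preceq\cR(\mF_n)$ and isolating the initial segment $[t]$ on which the $\cR$-sets are ``small,'' one shows that $\bigl|\{\cR(\mF_i):i\in[t]\}\bigr|\le 4$. The reason is that if three distinct $\cR$-sets have a common element of $[k]$ in their complements, the three pairs $(\vu,\vv)$ must be pairwise distinct in both coordinates, and this forces a rigidity $\langle\mB_j\mA_j^{-1}\rangle=\mathrm{const}$ on that common part---which in turn makes a fourth distinct $\cR$-set contain everything the first three miss, contradicting a fifth. Once you know there are at most four distinct sets among $\{\cR(\mF_i):i\in[t]\}$, and that each set $\cS$ contains every $i$ with $\cR(\mF_i)=\cS$ (so $|\cS|\ge s$ where $s$ is its multiplicity), Cauchy--Schwarz gives $\sum_{i\in[t]}|\cR(\mF_i)|=\sum_j s_j|\cS_j|\ge\sum_j s_j^2\ge t^2/4$, and the remaining $n-t$ terms are handled by $|\cR(\mF_{k+1})|\ge\max\{t/4,\,k+1-t\}$. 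Your ``per-pair inequality $\beta_i+\beta_{i'}\ge 5k/2$'' is false in general (the tight construction has many $\beta_i\approx 5k/4$), and the ``at most $n/2$ optimal'' heuristic does not produce the constant $5/4$; the $4$ in the denominator comes precisely from the four-set bound plus Cauchy--Schwarz.
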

\begin{corollary}
    \label{coro:max-min-bw}
    Let $\cC$ be a $(k+2, k,2)$ MDS array code and $\beta_i$ the minimal repair bandwidth for each packet $C_i$ (where $1\le i\le n$).
    Then the {\bf max-min repair bandwidth} $\beta(\cC) := \max_{i\in[n]}\{\beta_i\}$ satisfies that
    \begin{equation*}
        \label{eq:max-min-bw}
         \beta(\cC) \geq \left\lceil\frac{5k}{4}\right\rceil.
    \end{equation*}
\end{corollary}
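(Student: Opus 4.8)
The plan is to obtain Corollary~\ref{coro:max-min-bw} as an immediate consequence of Theorem~\ref{thm:avg-min-bw}, using only two elementary observations: the maximum of a finite list of numbers is never smaller than their arithmetic mean, and each $\beta_i$ is a nonnegative integer because it counts the number of sub-packets transferred in an optimal repair of $C_i$.

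Concretely, the first step is to record the chain $\beta(\cC) = \max_{i\in[n]}\{\beta_i\} \geq \frac{1}{n}\sum_{i=1}^{n}\beta_i = \bar\beta(\cC)$, which holds because no entry of an average can exceed the largest entry. The second step is to invoke Theorem~\ref{thm:avg-min-bw} to get $\bar\beta(\cC) \geq \frac{5k}{4}$, and hence $\beta(\cC) \geq \frac{5k}{4}$. The third and final step is to use that $\beta(\cC)$, being a maximum of integers, is itself an integer, so the bound $\beta(\cC) \geq \frac{5k}{4}$ sharpens to $\beta(\cC) \geq \left\lceil \frac{5k}{4} \right\rceil$.

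There is no genuine obstacle in this argument; the only subtlety worth spelling out is the integrality of each $\beta_i$ (and therefore of their maximum), which is exactly what legitimizes replacing $\frac{5k}{4}$ by its ceiling. All the substantive work is already carried by Theorem~\ref{thm:avg-min-bw}, so the proof of the corollary is a single short paragraph.
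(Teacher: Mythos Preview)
Your proposal is correct and matches the paper's intent: the corollary is stated without a separate proof, precisely because it follows from Theorem~\ref{thm:avg-min-bw} via the average-versus-maximum inequality plus integrality, exactly as you outline. (Note also that the displayed inequality in the corollary should read $\beta(\cC)\geq\lceil 5k/4\rceil$ rather than $\bar\beta(\cC)$; you have correctly proved the intended statement.)
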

\begin{theorem}
    \label{thm:avg-min-io}
    Let $\cC$ be a $(k+2, k,2)$ MDS array code and $\gamma_i$ the minimal repair bandwidth for each packet $C_i$ (where $1\le i\le n$).
    Then the {\bf avg-min repair IO} $\bar\gamma(\cC) := \frac{1}{n}\sum_{i=1}^n \gamma_i$ satisfies that
    \begin{equation*}
        \label{eq:avg-min-io}
        \bar\gamma(\cC)  \geq \frac{4k+1}{3}.
    \end{equation*}
\end{theorem}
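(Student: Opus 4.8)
\emph{Proof plan.} The plan is to pass to a purely linear‑algebraic model, pin down exactly which helper packets can be repaired from with at most one sub‑packet of I/O, and then bound a combinatorial incidence count over configurations of subspaces of $\Fq^4$; the argument runs parallel to the one for Theorem~\ref{thm:avg-min-bw}, with one extra geometric constraint that sharpens the constant.

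\emph{Step 1 (geometric reduction).} As in the proof of Theorem~\ref{thm:avg-min-bw}, I would first reduce to linear $\cC$ and linear repair. Writing a parity‑check matrix as $\bH=[\bH_1\mid\cdots\mid\bH_n]\in\Fq^{4\times 2n}$ with blocks $\bH_j\in\Fq^{4\times 2}$ of rank $2$, put $K_j:=\ker\bH_j^{\top}\subseteq\Fq^4$ (a $2$‑dimensional subspace) and let $A_j,B_j$ be the hyperplanes orthogonal to the two columns of $\bH_j$, so that $A_j\cap B_j=K_j$ and $A_j\neq B_j$. The MDS property is exactly $K_j\cap K_{j'}=\{0\}$ for all $j\neq j'$; consequently the $2n$ hyperplanes $A_1,B_1,\dots,A_n,B_n$ are pairwise distinct and each contains exactly one $K_j$. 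A linear repair of $C_i$ is a choice of $2$‑dimensional $W_i\subseteq\Fq^4$ with $W_i\cap K_i=\{0\}$; helper $C_j$ then costs $2-\mathbf 1[W_i\subseteq A_j]-\mathbf 1[W_i\subseteq B_j]$ accessed sub‑packets, and since $W_i\cap K_i=\{0\}$ forces $W_i\not\subseteq A_i,B_i$, the total I/O equals $2(n-1)-m(W_i)$, where $m(W):=\#\{P\in\{A_1,B_1,\dots,A_n,B_n\}:W\subseteq P\}$. Hence $\gamma_i=2(n-1)-M_i$ with $M_i:=\max\{\,m(W):\dim W=2,\ W\cap K_i=\{0\}\,\}$, so it suffices to prove
\[
 \sum_{i=1}^{n}M_i\ \le\ \frac{(k+2)(2k+5)}{3}.
\]

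\emph{Step 2 (structure of an I/O‑efficient repair).} For a $2$‑dimensional $W$ with $W\neq K_j$ for all $j$, the condition $W\subseteq A_j$ (or $B_j$) is equivalent to $\dim(W\cap K_j)=1$ together with $W+K_j\in\{A_j,B_j\}$, and if $W$ is counted for two indices $j\neq j'$ then $W\cap K_j$ and $W\cap K_{j'}$ are distinct $1$‑dimensional subspaces of $W$ (else $K_j\cap K_{j'}\neq\{0\}$); hence $m(W)\le\min(q+1,n-1)$. If $m(W)\ge 3$, the $K_j$ met by $W$ are pairwise‑skew lines through the common transversal $W$, so they lie on a single regulus $\cR$ with $W$ on the opposite regulus $\cR'$; parametrising the $q+1$ hyperplanes through any such $K_j$ by the $q+1$ lines of $\cR'$, block $j$ has two ``marks'' $\ell_j^{A},\ell_j^{B}\in\cR'$, and $W$ is counted for $K_j$ precisely when $W\in\{\ell_j^{A},\ell_j^{B}\}$. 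This ``two marked lines out of $q+1$'' requirement is the only new ingredient over the bandwidth count of Theorem~\ref{thm:avg-min-bw}, where $W$ is counted for $j$ as soon as it meets $K_j$.

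\emph{Step 3 (the counting bound — the hard part).} The idea is to double‑count $\sum_i M_i=\sum_{P}\#\{i:W_i\subseteq P\}$ over the $2n$ hyperplanes $P$, using that two distinct hyperplanes meet in the single $2$‑dimensional subspace $P\cap P'$, the constraints $\dim W_i=2$ and $W_i\cap K_i=\{0\}$, and the regulus/marking picture of Step~2. Concretely, group the (non‑trivial) repair spaces by the regulus attached to them: for a regulus $\cR$ carrying $r$ of the $K_j$, each of those $r$ blocks contributes at most two marks to $\cR'$, so any two distinct repair spaces lying in $\cR'$ jointly collect at most $2r$ of these marks, while a single repair space in $\cR'$ can be used by at most $n-r$ nodes (those $i$ with $K_i\notin\cR$); combining these with the trivial bound $m(W_i)\le 2$ when no regulus is attached, and with $\sum_{\text{groups}}(\#\text{nodes in the group})\le n$, and controlling overlaps via the fact that two distinct reguli share at most two lines, should give $\sum_i M_i\le\frac{n(2n+1)}{3}=\frac{(k+2)(2k+5)}{3}$. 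Substituting into $\gamma_i=2(n-1)-M_i$ yields $\bar\gamma(\cC)\ge 2(k+1)-\frac{2k+5}{3}=\frac{4k+1}{3}$. I expect the genuinely delicate point to be exactly this extremal optimisation: the code designer has considerable freedom (reguli may overlap, and many blocks may place a mark on the same line of $\cR'$), and it is precisely the scarcity of marks — two per block rather than the whole pencil of $q+1$ — that turns the constant $\tfrac34$ of the bandwidth bound into $\tfrac23$; the cleanest way to close it is probably to phrase the case analysis over the parameters $(r,\#\text{serving nodes})$ of the reguli in play as a small linear program (equivalently, a discharging argument).
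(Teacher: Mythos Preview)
Your geometric reformulation in Step~1 is correct: writing $\gamma_i=2(n-1)-M_i$ with $M_i=\max\{m(W):\dim W=2,\ W\cap K_i=0\}$ and $m(W)$ counting the column-orthogonal hyperplanes $A_j,B_j$ containing $W$ does reduce the theorem to $\sum_i M_i\le n(2n+1)/3$. But Step~2 contains a genuine error. You assert that when $m(W)\ge 3$, \emph{all} the $K_j$ met by $W$ lie on a single regulus $\cR$. That is false: three pairwise-skew lines determine a unique regulus, and $W$, being a common transversal, lies on the opposite regulus $\cR'$; but a fourth skew line meeting $W$ need not meet the remaining $q$ lines of $\cR'$ and hence need not belong to $\cR$. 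Concretely, take $W=\langle e_1,e_2\rangle$, three lines $\langle e_1+te_2,\,e_3+te_4\rangle$ of one ruling of the quadric $x_1x_4=x_2x_3$, and the off-quadric line $\langle e_1+ae_2,\,e_3+be_4\rangle$ with $a\neq b$ and $a,b$ distinct from the chosen $t$'s; all four are pairwise skew and meet $W$, yet the last one is not on the regulus. Your ``two marks per block on the lines of $\cR'$'' bookkeeping therefore has no well-defined $\cR'$ to refer to, and Step~3 --- which is in any case only a heuristic outline with the decisive inequality asserted rather than proved --- has nothing to stand on.

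The paper's argument is quite different and bypasses projective geometry. After relaxing from $\cM_i$ to the larger family $\cM'_i$ and putting two nodes in systematic position (so that $\mH_{k+1}=\bigl[\begin{smallmatrix}\mI_2\\\bO\end{smallmatrix}\bigr]$ and $\mH_{k+2}=\bigl[\begin{smallmatrix}\bO\\\mI_2\end{smallmatrix}\bigr]$), the condition $\nz{\mF'_i\mH_{k+1}},\,\nz{\mF'_i\mH_{k+2}}\le 1$ forces each optimal repair matrix $\mF'_i$ (for $i$ in an initial segment $[t]$) to be, up to a left factor, $\diag(\vu_i,\vv_i)$ with $\langle\vu_i\rangle,\langle\vv_i\rangle\in\{\langle\ve_1\rangle,\langle\ve_2\rangle\}$. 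Thus there are only \emph{four} candidate repair planes in total --- this finiteness, not a regulus count, is what distinguishes the I/O problem from the bandwidth problem (where $\vu_i,\vv_i$ may be arbitrary). A short pigeonhole argument (Lemma~\ref{lem:io-structure result}) then shows that at most \emph{three} of these four configurations can occur among the $\cN(\mF'_i)$, and a Cauchy--Schwarz case split over one, two, or three configurations gives $\sum_i|\cN(\mF'_i)|\ge(k+1)(k+2)/3$, which is the bound. The ``four planes, of which at most three are usable'' structural fact is the missing mechanism; it does, cheaply and exactly, what you hoped the regulus optimisation would do.
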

\begin{corollary}
\label{coro:max-min-io}
    Let $\cC$ be a $(k+2, k,2)$ MDS array code and $\gamma_i$ the minimal repair bandwidth for each packet $C_i$ (where $1\le i\le n$).
    Then the {\bf max-min repair IO} $\gamma(\cC) := \max_{i\in[n]}\{\gamma_i\}$ satisfies that
    \begin{equation*}
        \label{eq:max-min-io}
        \gamma(\cC) \geq \left\lceil\frac{4k+1}{3}\right\rceil.
    \end{equation*}
\end{corollary}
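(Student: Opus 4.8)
The plan is to obtain Corollary~\ref{coro:max-min-io} as an immediate consequence of Theorem~\ref{thm:avg-min-io}, using only the trivial fact that a maximum of finitely many numbers is at least their arithmetic mean, followed by a rounding step. Recall that $\gamma_i$ denotes the minimal repair I/O of $C_i$, i.e., the least number of $\Fq$-valued sub-packets read from the helper packets over all valid repair schemes for $C_i$, and $\gamma(\cC)=\max_{i\in[n]}\gamma_i$.

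First I would record that for every $(k+2,k,2)$ MDS array code $\cC$,
\begin{equation*}
    \gamma(\cC) \;=\; \max_{i\in[n]}\gamma_i \;\ge\; \frac{1}{n}\sum_{i=1}^{n}\gamma_i \;=\; \bar\gamma(\cC),
\end{equation*}
since no term of an average can exceed the maximum term. Substituting the bound $\bar\gamma(\cC)\ge \tfrac{4k+1}{3}$ supplied by Theorem~\ref{thm:avg-min-io} yields the real-valued inequality $\gamma(\cC)\ge \tfrac{4k+1}{3}$. Second, I would sharpen this to the ceiling form by integrality: each of the $n-1$ helper packets contributes at most its two sub-packets, so $\gamma_i$ is a nonnegative integer bounded by $2(n-1)$, and the minimum over the finite nonempty family of valid repair schemes is attained and is again an integer; hence $\gamma(\cC)=\max_i\gamma_i\in\mathbb{Z}_{\ge 0}$. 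An integer that is at least $\tfrac{4k+1}{3}$ is necessarily at least $\ceil{\tfrac{4k+1}{3}}$, which is the asserted bound.

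I do not anticipate a genuine obstacle here: all of the combinatorial and linear-algebraic content resides in Theorem~\ref{thm:avg-min-io}, and the corollary is purely the ``max $\ge$ average, then round up'' step, requiring no additional hypothesis on $\cC$. The only point warranting a sentence of care is the integrality/attainment claim for $\gamma_i$ — namely that the minimal repair I/O is a minimum over a finite set of integers rather than an infimum — but this is immediate from the finiteness of the sub-packetization $\ell=2$ and of the relevant families of repair schemes.
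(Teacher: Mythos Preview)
Your proposal is correct and matches the paper's (implicit) approach: the paper states Corollary~\ref{coro:max-min-io} immediately after Theorem~\ref{thm:avg-min-io} without a separate proof, treating it as the obvious ``$\max\ge$ average, then round up by integrality'' consequence, exactly as you outline. Your remark that each $\gamma_i$ is an integer attained as a minimum over a finite set is the only point needing justification, and it follows since $\Fq$ is finite (so $\cM_i\subseteq\Fq^{2\times 4}$ is finite) and $\tO(\mM)$ is integer-valued.
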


On the other hand, we construct two classes of $(n = k +2, k, \ell = 2)$ MDS array codes that achieve the lower bound on repair bandwidth (Theorem~\ref{thm:avg-min-bw}) and the lower bound on repair I/O (Theorem~\ref{thm:avg-min-io}), respectively, which shows that those lower bounds are all tight.

The rest of the paper proceeds as follows. Section~\ref{sec:linear-repair} describes the linear repair process of MDS array codes. Section~\ref{sec:bounds} characterizes the theoretical lower bounds on repair bandwidth and repair I/O for $(n = k+2, k, \ell = 2)$ MDS array codes. Section~\ref{sec:construction} constructs two classes of such codes that achieve the proposed lower bounds. Section~\ref{sec:conclusion} concludes the paper.
% !TEX root = ../main.tex

\section{MDS Codes and Their Repair Schemes}\label{sec:linear-repair}

For convenience, we assume that each storage node contains exactly one packet, with each packet represented as a column vector of length $\ell$ over the finite field $\mathbb{F}_q$. As a result, each sub-packet can be seen as a symbol from $\mathbb{F}_q$. One stripe corresponds to a codeword of an MDS code. In addition, we use the notation $[n]$ to denote the set $\{1, 2, \dots, n\}$.

\noindent\textbf{MDS codes.}
An $(n, k, \ell)$ linear MDS array code $\cC$ over the finite field $\Fq$ comprises $n$ nodes (denoted by $C_1, C_2, \dots, C_n$),
where some $k$ nodes are data nodes and the rest $r=n-k$ nodes are parity nodes, each node is a column vector of length $\ell$ over $\Fq$. In addition, any $r$ out of the $n$ nodes can be recovered from the remaining $k$ nodes.

The relationship among the $n$ nodes of an $(n, k, \ell)$ MDS array code $\cC$ can be described by the following parity check equations:
\begin{equation}\label{eq:pc}
    H_1C_1+H_2C_2+\cdots + H_nC_n = \bm 0,
\end{equation}
where each $H_i$ ($i\in[n]$) is an $r\ell \times \ell$ parity check sub-matrix over $\Fq$.
Then, the $r$ nodes $C_{i_1}, C_{i_2}, \dots, C_{i_r}$ can be recovered from the remaining $k$ nodes if and only if the square matrix
$[H_{i_1} \smat H_{i_2} \smat \dots \smat H_{i_r}]$ is invertible.

\noindent\textbf{Repair scheme driven by a repair matrix.}
For any node $C_i$ of an $(n, k, \ell)$ MDS code $\cC$, any linear repair process of $C_i$ can be described by an $\ell\times r\ell$ \emph{repair matrix} $M$ over $\Fq$. Conversely, any $\ell\times r\ell$ matrix $M$ satisfying specific conditions can characterize a linear repair process for an MDS code $\cC$.

Let $M$ be an $\ell \times r\ell$ matrix over $\Fq$. Multiplying $M$ with the parity check equations~\eqref{eq:pc}, we obtain the following \emph{repair equations}:
\begin{equation}
    \label{eq:repair}
    M H_1 C_1 + M H_2 C_2 + \cdots + M H_n C_n = \bm 0.
\end{equation}
If the square matrix $M H_i$ is invertible for some $i$ (where $1\le i\le n$), then the repair equations~\eqref{eq:repair} can be rewritten as
\[
    C_i=-(M H_i)^{-1}\sum_{j\in[n]\setminus\{i\}}M H_j C_j,
\]
which means $C_i$ can be computed from the $n-1$ vectors $M H_j C_j, j\in[n]\setminus\{i\}$.

\noindent\textbf{Repair bandwidth and repair I/O.}
Now we analyze the repair bandwidth and repair I/O of the repair process above, driven by the repair matrix $M$. Suppose that the $\ell\times \ell $ matrix $MH_i$ is invertible for some $i\in[n]$. Then, each helper node $C_j$ ($j\in[n]\setminus\{i\}$) needs to transmit the vector  $M H_j C_j$ to the failed node $C_i$. For each $j\in[n]$, we denote the rank of the $\ell\times \ell$ matrix $M H_j$ as $\rk{M H_j}$, and the number of nonzero columns in $M H_j$ as $\nz{M H_j}$. Then, the square matrix $MH_j$ can be decomposed as
\begin{equation*}
    \label{eq:decompose}
    M H_j = \mL_j \cdot \mR_j,
\end{equation*}
where $\mL_j$ is an $\ell\times \rk{M H_j}$ matrix and $\mR_j$ is an $\rk{M H_j}\times \ell$ matrix. We can obtain $\mL_j$ by selecting $\rk{MH_j}$ independent columns of $MH_j$, and obtain $\mR_j$ from the linear combination coefficients for the columns of $\mL_j$ to generate the columns in $MH_j$. We have $\rk{\mR_j} = \rk{MH_j}$, $\nz{\mR_j} = \nz{MH_j}$, and the repair equations~\eqref{eq:repair} can be further rewritten as
\begin{equation*}
    \label{repair_eqn}
    C_i = -(M H_i)^{-1} \cdot \sum_{j\in[n]\setminus\{i\}} \mL_j (\mR_jC_j).
\end{equation*}
Since $\mL_j$, $j\in [n]\setminus\{i\}$, can be calculated from $M$ and the parity check sub-matrix $H_j$, it is sufficient to transmit the $n-1$ vectors $\mR_j C_j$, $j\in [n]\setminus\{i\}$ to the failed node $C_i$ for the repair.

Note that each $\mR_j C_j$ is a vector of length $\rk{MH_j}$ over $\Fq$, and transmitting $\mR_j C_j$ requires to read $\nz{\mR_j}$ symbols from $C_j$. Therefore, during the repair process of $C_i$ driven by the repair matrix $M$, the repair bandwidth, i.e., the total number of symbols transmitted from the helper nodes to the failed node, is
\begin{equation}\label{eq:BW}
    \tB(M) =  \sum_{j\in[n]\setminus\{i\}}\rk{MH_j} = \sum_{j\in[n]}\rk{MH_j} - \ell.
\end{equation}
Similarly, the repair I/O, i.e., the total number of symbols read from the helper nodes, is
\begin{equation}\label{eq:IO}
    \begin{aligned}
        \tO(M) & = \sum_{j\in[n]\setminus\{i\}}\nz{\mR_j} = \sum_{j\in[n]\setminus\{i\}}\nz{M H_j} \\
               & = \sum_{j\in [n]}\nz{MH_j} - \ell.
    \end{aligned}
\end{equation}
Further, the \emph{repair degree}, i.e., the number of helper nodes during the repair, is
\begin{equation*}
    \label{eq:degree}
    d(M) = \left|\{i\in[n]:M H_i\neq \mO \}\right|-1.
\end{equation*}

\begin{lemma}\label{lem:degree}
    For an $(n, k, \ell)$ MDS code $\cC$ and a nonzero matrix $M$ of size $\ell\times r\ell$, we have $d(M)\geq k$.
\end{lemma}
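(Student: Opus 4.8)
The plan is to argue by contradiction via the MDS (equivalently, parity-check) structure. Suppose $d(\mM) < k$, i.e., the set $S := \{i \in [n] : \mM\mH_i \neq \mO\}$ has $|S| \le k$. Let $T := [n] \setminus S$, so $|T| \ge n - k = r$. Pick any $r$-subset $T_0 \subseteq T$; by the MDS property the $r\ell \times r\ell$ matrix $[\mH_j : j \in T_0]$ (columns indexed by $T_0$) is invertible, so its columns span $\ff_q^{r\ell}$. Equivalently, the row space of the $r\ell \times n\ell$ full parity-check matrix $\bH = [\mH_1 \ \cdots \ \mH_n]$, restricted to the coordinates in $T_0$, is already all of $\ff_q^{r\ell}$ — there is no nonzero row-combination of $\bH$ that vanishes on all coordinates in $T_0$.

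Now consider the $\ell \times n\ell$ matrix $\mM\bH = [\mM\mH_1 \ \cdots \ \mM\mH_n]$. Each of its $\ell$ rows is an $\ff_q$-linear combination of the rows of $\bH$ (with coefficients given by the corresponding row of $\mM$). By the definition of $S$, every block $\mM\mH_i$ with $i \notin S$ is zero; since $T_0 \subseteq T = [n]\setminus S$, the matrix $\mM\bH$ vanishes identically on all coordinates in $T_0$. If $\mM \ne \mO$, then $\mM\bH \ne \mO$ as well: indeed $\mM = \mM\bH \cdot (\text{right inverse picking the } T_0\text{-block})^{-1}$ on suitable coordinates — more directly, if $\mM\bH = \mO$ then since $\bH$ has full row rank $r\ell$ it has a right inverse, forcing $\mM = \mO$. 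Hence some row of $\mM\bH$ is a nonzero row-combination of the rows of $\bH$ that vanishes on all of $T_0$, contradicting the previous paragraph.

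Therefore $|S| \ge k+1$, and $d(\mM) = |S| - 1 \ge k$, as claimed. The only place requiring care is the passage from "$\mM \ne \mO$" to "$\mM\bH$ has a nonzero row"; this is exactly where the full row rank of $\bH$ (a consequence of the MDS property — e.g.\ any $r$ blocks already give an invertible $r\ell \times r\ell$ submatrix) is used, and it is the crux of the argument. Everything else is a direct unwinding of the definitions of $d(\mM)$ and of the parity-check representation in~\eqref{eq:pc}.
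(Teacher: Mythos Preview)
Your argument is correct and follows essentially the same approach as the paper: both pick $r$ indices $T_0$ on which $\mM\mH_j=\mO$, invoke the MDS property to get invertibility of $[\mH_j:j\in T_0]$, and conclude $\mM=\mO$. The paper's version is simply the one-line form of your detour through row combinations: from $\mM\,[\mH_j:j\in T_0]=\mO$ with $[\mH_j:j\in T_0]$ invertible, one immediately gets $\mM=\mO$.
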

\begin{proof}
    If $d(M) \leq k-1$, then there exist $r$ distinct $a_1, \dots, a_r\in[n]$ such that $M\cdot[H_{a_1}, H_{a_2}, \cdots,  H_{a_r}]=\bO$. 
    By the MDS property, $[H_{a_1}, H_{a_2},\cdots, H_{a_r}]$ is invertible, and hence $M$ is a zero matrix, which leads to a contradiction.
\end{proof}

We denote the index set of nodes which can be repaired by an $\ell\times r\ell$ matrix $M$ as
\begin{equation}\label{eq:RM}
    \cR(M):=\{i\in[n]:\ \rk{M H_i}=\ell\}.
\end{equation}
Further, for each node $C_i$, $i\in [n]$, we denote the matrix set which can repair $C_i$ as
\begin{equation}\label{eq:Mi}
    \cM_i:=\left\{M\in\Fq^{\ell\times r\ell}:i\in\mathcal{R}(M)\right\}.
\end{equation}

Based on the analysis above, we can represent
the avg-min repair bandwidth $\bar \beta(\cC)$, the max-min repair bandwidth $\beta(\cC)$,
the avg-min repair I/O $\bar \gamma(\cC)$ and the max-min repair I/O $\gamma(\cC)$
of an $(n, k, \ell)$ MDS code $\cC$ in terms of 
\begin{align}
    \bar\beta(\cC) &= \frac{1}{n} \sum_{ i\in [n]}\beta_i = \frac{1}{n}\sum_{ i\in [n]}\min_{M\in \cM_i}\{\tB(M)\},\label{def:avg-min-bw}\\
    \beta(\cC) &= \max_{ i\in [n]} \{\beta_i\} =\max_{ i\in [n]}\min_{M\in \cM_i}\{\tB(M)\},\label{def:max-min-bw}\\
    \bar \gamma(\cC) &= \frac{1}{n}\sum_{i\in[n]} \gamma_i =\frac{1}{n}\sum_{i\in[n]}\min_{M\in \cM_i}\{\tO(M)\}, \label{def:avg-min-io}\\   \gamma(\cC) &= \max_{i\in[n]} \{\beta_i\} =\max_{ i\in [n]}\min_{M\in \cM_i}\{\tO(M)\}\label{def:max-min-io}.
\end{align}
% !TEX root = ../main.tex

\section{Fundamental Limits}\label{sec:bounds}
This section establishes the lower bounds of {\bf 1)} the avg-min repair bandwidth~\eqref{def:avg-min-bw}, and {\bf 2)} the avg-min repair I/O~\eqref{def:avg-min-io} of a $(k+2, k, 2)$ MDS array code $\cC$ over finite field $\Fq$, where $r = n-k = 2, \ell = 2$ and the parity check sub-matrix $H_i$ for node $C_i$ (where $i\in [n]$) is of size $4\times 2$. 
Throughout this section, we work with such codes with a fixed parity-check matrix in systematic form:
\[
H=
\begin{bmatrix}
  A_{1} & \cdots & A_{k} & I_2 & O    \\
  B_{1} & \cdots & B_{k} &  O   & I_2
\end{bmatrix},
\]
where each block $A_i,B_i\in\Fq^{2\times 2}$ is invertible by the MDS property.

\begin{lemma}\label{lem:d=k+1}
For a $(k+2,k,2)$ MDS array code, to determine the minimum repair bandwidth or repair I/O of any node $C_i$, it suffices to consider repair matrices with repair degree $d(M) = k+1$.
\end{lemma}

\begin{proof}
    For any $(n, k, \ell)$ MDS code, the repair degree $d(M)$ must satisfy $d(M) \geq k$. Since $n=k+2$, the repair degree is constrained to $d(M) \in \{k, k+1\}$. Note that trivial repair ($d(M)=k$) yields the fixed bandwidth and I/O of $2k$. Therefore, it is sufficient to demonstrate the existence of a repair matrix $M$ for $C_i$ such that $d(M)=k+1$ and both repair bandwidth and repair I/O $\le 2k$.

    Without loss of generality, assume $C_1$ fails. Consider the matrices:
    $$
    M_1=
    \begin{bmatrix}
        1 & 0 & 0 & 0\\
        0 & 0 & 1 & 0
    \end{bmatrix}
    \quad \text{and} \quad
    M_2=
    \begin{bmatrix}
        1 & 0 & 0 & 0\\
        0 & 0 & 0 & 1
    \end{bmatrix}.
    $$
    Since $W_1$ is invertible, at least one of $M_1H_1$ or $M_2H_1$ is invertible. Thus, there exists a matrix $M \in \{M_1, M_2\}$ capable of repairing $C_1$ with $d(M)=k+1$ and $\tO(M)\leq 2k$. Since $\tB(M)\leq\tO(M)$, its repair bandwidth also no more than $2k$.

    This confirms that the minimum bandwidth and I/O are determined within the case $d(M)=k+1$.
\end{proof}

 Based on Lemma~\ref{lem:d=k+1}, we assume that the repair degree is $k+1$ in the subsequent analysis.

To analyze the minimum repair bandwidth and repair I/O, we define a total order on the power set $2^{[n]}$. It first compares subsets by their cardinality. Then, for subsets of the same cardinality, it compares them by the dictionary order.
\begin{definition}[\bf Total order $\prec$ on the power set $2^{[n]}$]\label{def:order}
    For any two subsets $\sS, \sT\subseteq [n]$, we define their order as follows:
    \begin{itemize}
        \item If $|\sS| < |\sT|$, then $\sS\prec\sT$. 
        \item If $|\sS| = |\sT|$, we suppose that $\sS = \{ s_1, s_2, \dots, s_m \}$ and $T = \{ t_1, t_2, \dots, t_m \}$ where $s_1 < s_2 < \dots < s_m $ and $ t_1 < t_2 < \dots < t_m $.
              Then, we say $\sS \prec \sT$ if and only if there exists an index $i\in[m]$ such that $s_i < t_i$, and for all $j < i$, $s_j = t_j$.
    \end{itemize}
\end{definition}

Now we consider the linear repair process for node $C_i$, $i\in[n]$, and measure the corresponding lower bounds on repair bandwidth and repair I/O.

\subsection{The Lower Bound on Avg-Min Repair Bandwidth}\label{subsection:bw}
In this subsection, since we focus on the repair bandwidth, and blockwise invertible column transformation does not change the repair bandwidth (according to \eqref{eq:BW}), it is convenient to further simplify the parity-check matrix to
\[
H=
\begin{bmatrix}
  I_2 & \cdots & I_2 & I_2 & O    \\
  W_{1} & \cdots & W_{k} &  O   & I_2
\end{bmatrix},
\]
where each $W_i = B_i A_i^{-1}$ is an invertible matrix over $\Fq^{2\times 2}$. 

Recall the definitions of set $\sR(M)$ in~\eqref{eq:RM} and set $\sM_i$ in~\eqref{eq:Mi}.
According to $d(M) = k+1$, we have
\begin{equation*}
    \tB(M) = \sum_{j\in [n]}\rk{MH_j}-2 = k + |\cR(M)|,
\end{equation*}
which means that for a node $C_i, i\in [n]$, and a repair matrix $M\in \sM_i$, the smaller $|\sR(M)|$ is, the smaller the repair bandwidth of $C_i$ is.

For each node $C_i$, let $\cR_i$ be the minimum set in $\{\cR(M):M\in\cM_i\}$ under the total order $\prec$. Let $M_i\in\cM_i$ be a repair matrix such that $\cR(M_i)=\cR_i$.
Then we have
    \begin{itemize}
      \item $M_i$ can repair $C_i$ ($i\in \cR_i$), and
      \item $M_i$ minimizes the repair bandwidth of $C_i$ ($\beta_i = k + |\cR_i|$).
    \end{itemize}
Thus, the avg-min repair bandwidth of code $\cC$ is
\begin{equation}\label{eq:sum_bound}
  \bar\beta(\cC) = \frac{1}{n}\sum_{i\in[n]}BW(M_i)=k+\frac{1}{n}\sum_{i\in[n]}|\cR_i|.
\end{equation}

\begin{remark}\label{rem:i in S}
  For every node $C_i$, $i\in[n]$, we have $i\in\cR_{i}$.
\end{remark}

\begin{lemma}\label{lem:bw-SiSj}
  For any two distinct nodes $C_i$ and $C_j, i,j\in[n]$, if $\cR_{i}\prec \cR_{j}$, then $j\notin \cR_{i}$.
\end{lemma}
\begin{proof}
  Suppose, for the sake of contradiction, that $j\in \cR_{i}$. This implies that the matrix $M_i$ is a valid repair matrix for node $C_j$ ( i.e., $M_i\in \cM_j$). Since $\cR_j$ is minimized under $\prec$, we must have $\cR_{j} \preceq \cR_{i}$. This contradicts the assumption that $\cR_{i}\prec \cR_{j}$. Therefore, we conclude that $j\notin \cR_{i}$.
\end{proof}

For the rest of this section, we will work under the assumption that $\cR_{1}\preceq \cR_{2} \preceq \dots \preceq \cR_{n}$.

\begin{lemma}\label{lem:structure_t}
    There exists an integer $i\in[k]$ such that $\cR_i\prec \cR_{k+1}$. Let
    $
        t := \max\{i\in[k]: \cR_i \prec \cR_{k+1}\}.
    $
    Then, $|\cR_{k+1}|\geq k+1-t$.
\end{lemma}

\begin{proof}
    First, we verify the existence of such an integer $i$. Suppose, for the sake of contradiction, that no such $i$ exists. This would imply $\cR_{1}=\dots=\cR_{k+1}$. Such a condition would result in a repair bandwidth strictly greater than $2k$, contradicting Lemma~\ref{lem:d=k+1}. Thus, the set $\{i\in[k]: \cR_i \prec \cR_{k+1}\}$ is non-empty, and $t$ is well-defined.

    By the definition of $t$, we have $\cR_{t+1} =\dots=\cR_{k+1}$. Remark~\ref{rem:i in S} implies that 
    $
        \{t+1, \dots, k+1\} \subseteq \cR_{k+1}.
    $
    Consequently, $|\cR_{k+1}| \geq |\{t+1, \dots, k+1\}| = k+1-t$.
\end{proof}

The following lemma demonstrates the structure of the repair matrices for the first \(t\) nodes.

\begin{lemma}\label{lem:bw-Fi}
  For each $i\in[t]$, the repair matrix $M_i$ can be decomposed as
  $$ \renewcommand{\arraystretch}{0.8}
  M_{i}=\Lambda
  \begin{bmatrix}
    \vu_i &  \\
     & \vv_i
  \end{bmatrix},$$
  where $\Lambda\in \Fq^{2\times 2}$ is an invertible matrix and $\vu_i,\vv_i\in\Fq^{1\times2}$ are nonzero vectors.
\end{lemma}

\begin{proof}
  Since $\cR_i\preceq \cR_t\prec \cR_{k+1}\preceq \cR_{k+2}$ for each $i\in[t]$, Lemma~\ref{lem:bw-SiSj} implies that $k+1, k+2 \notin \cR_i$. 
  
  By the definition of $\cR_{i}$, we have $\rk{M_{i}H_{j}} \leq 1$ for $j \in \{k+1, k+2\}$. 
  On the other hand, since the repair degree is $d(M_i)=k+1$, the contribution from each helper must be nonzero, implying $\rk{M_{i}H_{j}} = 1$ for $j \in \{k+1, k+2\}$.

  Consequently, the sub-matrices $M_i H_{k+1}$ and $M_i H_{k+2}$ can be expressed as rank-1 matrices constructed from row vectors $\vu_i, \vv_i \in \Fq^{1 \times 2}$. Specifically, there exist scalars $a,b,c,d \in \Fq$ such that
  $M_{i} H_{k+1}=
  \begin{bmatrix}
    a \vu_i \\
    b \vu_i
  \end{bmatrix}$ and $ M_{i} H_{k+2}=
  \begin{bmatrix}
    c \vv_i \\
    d \vv_i
  \end{bmatrix},$
  where $\vu_i$ and $\vv_i$ are nonzero.

  Note that the matrix $[H_{k+1} ~ H_{k+2}]$ forms the identity matrix $I_4$. We can reconstruct $M_i$ as follows:
  \begin{align*}
    M_i = M_i[ H_{k+1} ~ H_{k+2}] = \Lambda
    \begin{bmatrix}
      \vu_i &  \\
       & \vv_i
    \end{bmatrix},
     \text{~where~} \Lambda= \begin{bmatrix} a & c \\ b & d \end{bmatrix}.
  \end{align*}
  Finally, since $M_i$ has rank $2$ and the block-diagonal matrix containing $\vu_i, \vv_i$ has rank $2$, the matrix $\Lambda$ must also be invertible.
\end{proof}

Let $V$ be a 2-length row vector or $2\times 2$ matrix over $\Fq$. Denote by $\Angle{V}$ the subspace spanned by $V$: $\Angle{V} = \{ \theta \cdot V : \theta\in \Fq\}$.
\begin{lemma}\label{lem:<>=<>}
    For $i\in[t]$ and $j\in[k]$, $j\notin \cR_i$ if and only if $\left\langle\vu_i\right\rangle=\left\langle\vv_i W_j\right\rangle$.
\end{lemma}
\begin{proof}
    By the definition of $\cR_i$, the condition $j\notin \cR_i$ is equivalent to $\rk{M_iH_j}=1$. Recall that $H_j=\begin{bmatrix}
        I_2\\
        W_j
    \end{bmatrix}$ for $j\in[k]$. By Lemma \ref{lem:bw-Fi},  $\rk{M_iH_j}=1$ if and only if  $\left\langle\vu_i\right\rangle=\left\langle\vv_i W_j\right\rangle$.
\end{proof}
The following theorem summarizes key properties of $\text{PGL}_2(\Fq)$ regarding transitivity, derived from \cite[Thm 9.8, 9.48]{rotman2012introduction}.

\begin{theorem}[Sharp 3-transitivity of $\text{PGL}_2(\Fq)$]\label{thm:sharp3}
  For any two ordered triples of pairwise distinct points
    \[
      \left(\Angle{\bm{u}_1},\Angle{\bm{u}_2},\Angle{\bm{u}_3}\right), \left(\Angle{\bm{v}_1},\Angle{\bm{v}_2},\Angle{\bm{v}_3}\right) \in (\mathbb{P}^1(\Fq))^3
    \]
    with $\Angle{\bm{u}_i} \neq \Angle{\bm{u}_j}$ and $\Angle{\bm{v}_i} \neq \Angle{\bm{v}_j}$ for $i \neq j$, there exists a unique
    projective linear transformation $\Angle{W}\in PGL_2(\Fq)$ such that
    $
      \Angle{\bm{u}_1} = \Angle{\bm{v}_1W}, 
      \Angle{\bm{u}_2} = \Angle{\bm{v}_2W}, 
      \Angle{\bm{u}_3} = \Angle{\bm{v}_3W}.
    $
\end{theorem}

Using the above results, we establish the following key lemma on the repair bandwidth.
\begin{lemma}\label{lem:bw-structure result}
  Among the sets $\cR_1, \cR_2, \ldots, \cR_t$, there are no more than $4$ that are distinct. Furthermore, if the number of distinct sets is $4$, then it must be the case that $t = k$.
\end{lemma}
\begin{proof}
  We prove the first assertion by contradiction. Suppose that there exist $5$ distinct sets in $\{\cR_{i}:i\in[t]\}$. Let $\cR_{a}\prec \cR_{b}\prec \cR_{c}\prec \cR_{d}\prec \cR_{e}$ denote these five sets in increasing order.
  
  Lemma~\ref{lem:bw-SiSj} implies that $d,e\notin \cR_{a}$, $ \cR_{b}$ and $\cR_{c}$. Consequently, by Lemma~\ref{lem:<>=<>}, we have
    \begin{align*}
      \Angle{\bm{u}_a} = \Angle{\bm{v}_aW_d}, \quad
      \Angle{\bm{u}_b} = \Angle{\bm{v}_bW_d}, \quad
      \Angle{\bm{u}_c} = \Angle{\bm{v}_cW_d}, \\
      \Angle{\bm{u}_a} = \Angle{\bm{v}_aW_e}, \quad
      \Angle{\bm{u}_b} = \Angle{\bm{v}_bW_e}, \quad
      \Angle{\bm{u}_c} = \Angle{\bm{v}_cW_e}.
    \end{align*}
    
  We assert that the projective points $\{\Angle{\vu_a}, \Angle{\vu_b}, \Angle{\vu_c}\}$ are all distinct, and likewise $\{\Angle{\vv_a}, \Angle{\vv_b}, \Angle{\vv_c}\}$ are all distinct. Otherwise, two of the sets among $\cR_a, \cR_b, \cR_c$ would coincide, contradicting the strict ordering $\cR_a \prec \cR_b \prec \cR_c$.
  
  Applying Theorem~\ref{thm:sharp3}, $\Angle{W_d} = \Angle{W_e}$. Recall that $d \in \cR_{d}$ (by Remark~\ref{rem:i in S}), so by Lemma~\ref{lem:<>=<>} we obtain $\Angle{\vu_d} \neq \Angle{\vv_d W_d}=\Angle{\vv_d W_e}$. It follows that $e \in \cR_d$. However, the ordering $\cR_d \prec \cR_e$ implies $e \notin \cR_d$ (by Lemma~\ref{lem:bw-SiSj}). This contradiction proves that $|\{\cR_{i}:i\in[t]\}| \leq 4$.

  Next, assume $\left|\{\cR_{i}:i\in[t]\}\right|= 4$.  
  Let $a,b,c,b\in[t]$ such that $ \cR_{a}\prec \cR_{b}\prec \cR_{c}\prec \cR_{d}$. Recall that $ \cR_{a}\prec \cR_{b}\prec \cR_{c}\prec \cR_{d}\prec \cR_{t+1}$. 
  Suppose, for the sake of contradiction, that $t < k$. 
  Since $t+1 \leq k$, the matrix $W_{t+1}$ exists. We can now repeat the argument from the first part of the proof. It leads to the conclusion that $t+1 \in \cR_d$.
  However, $\cR_d \prec \cR_{t+1}$ implies $t+1 \notin \cR_d$. This contradiction implies that the assumption $t < k$ is false. Therefore, we must have $t=k$.
\end{proof}

Now, we are ready to prove the lower bound on the avg-min repair bandwidth in Theorem~\ref{thm:avg-min-bw}.

\begin{proof}[Proof of Theorem~\ref{thm:avg-min-bw}]
  Recall that the avg-min repair bandwidth is given in \eqref{eq:sum_bound} as
  \(\bar\beta(\cC)=k+\frac{1}{n}{\sum_{ i\in [n]}|\cR_{i}|}.\)  Since $\cR_{t+1} = \dots = \cR_{k+1}$ and $|\cR_{k+1}| \leq |\cR_{k+2}|$, we can bound the sum as:
  $$\sum_{i\in[n]}|\cR_{i}| \geq \sum_{i\in[t]}|\cR_{i}| + (k+2-t)|\cR_{k+1}|. $$

    We analyze the lower bound in two cases based on the value of $t$.

    \textbf{Case 1: $t < k$.}
  According to Lemma~\ref{lem:bw-structure result}, the family $\{\cR_{i}:i\in[t]\}$ contains at most 3 distinct sets, denoted by $\{\cS_1, \cS_2, \cS_3\}$. 
  Let $s_j = |\{ i \in [t] : \cR_i = \cS_j \}|$ be the multiplicity of each set. It follows that $s_1 + s_2 + s_3 \geq t$. Crucially, since $i \in \cR_i$ for all $i$, the set $\cS_j$ must contain all indices $i$ such that $\cR_i = \cS_j$. Thus, $|\cS_j| \geq s_j$. Applying the mean inequality, we have:
  \begin{align*}
    \sum_{i\in[t]}|\cR_{i}| &= \sum_{j=1}^3 s_j |\cS_j| \geq \sum_{j=1}^3 s_j^2 \geq \frac{1}{3}\left(\sum_{j=1}^3 s_j\right)^2 \geq \frac{t^2}{3}.
  \end{align*}
   Furthermore, since $\cR_i \prec \cR_{k+1}$ for all $i \in [t]$, we have $|\cR_{k+1}| \ge \max_{j\in[3]} |\cS_j| \ge \max_{j\in[3]} s_j \ge t/3$. Combined with Lemma~\ref{lem:structure_t}, we obtain 
  \(
    |\cR_{k+1}| \geq \max\left\{\frac{t}{3}, k+1-t\right\}.
  \)
   Since $|\cR_{k+1}|\geq \max\{|\cS_1|,|\cS_2|,|\cS_3|\}$ and $|\cR_{k+1}|\geq k+1-t$, it follows that $|\cR_{k+1}|\geq \max\{\frac{t}{3},k+1-t\}$. Substituting these into \eqref{eq:sum_bound}, the total sum is bounded by:
  \begin{align*}
    \sum_{i\in[n]}|\cR_{i}| & \geq \sum_{i\in[t]}|\cR_{i}|+(k+2-t)|\cR_{k+1}|            \\
    & \geq \frac{t^2}{3}+ (k+2-t)\max\{\frac{t}{3},k+1-t\} \\
    & \geq \frac{1}{4}(k+1)(k+2).
  \end{align*}

  \textbf{Case 2: $t = k$.}
  By Lemma~\ref{lem:bw-structure result}, we have $|\{\cR_{i}:i\in[k]\}|\leq 4$. there are at most 4 distinct sets in $\{\cR_{i}:i\in[k]\}$. Similar to the previous case, let $s_j$ be the multiplicity of the distinct sets for $j\in[4]$. We have $\sum_{j=1}^4 s_j \geq k$ and $|\cS_j| \ge s_j$. Then, 
  \begin{align*}
    \sum_{i\in[k]}|\cR_{i}| \geq \sum_{j=1}^4 s_j^2 \geq \frac{1}{4}\left(\sum_{j=1}^4 s_j\right)^2 \geq \frac{k^2}{4}.
  \end{align*}
  Also, $|\cR_{k+1}| \geq \max_{j\in[4]} |\cS_j| \geq \frac{k}{4}$. Thus, 
  \begin{align*}
    \sum_{i\in[n]}|\cR_{i}| \geq \frac{k^2}{4} + 2\left(\frac{k}{4}\right) = \frac{k^2+2k}{4} = \frac{k(k+2)}{4}.
  \end{align*}

  By substituting the two cases into \eqref{eq:sum_bound}, we can complete the proof.
 \end{proof}

\subsection{The Lower Bound on Avg-Min Repair I/O}

Similar to $\sR(M)$ and $\sM_i$, for an $2\times 4$ repair matrix $M$, we define the following two sets
\begin{align*}
    &\cN(M) :=\{i\in [n] : \nz{MH_i} = 2\},\\
    &\cM'_i :=\{M\in \Fq^{\ell\times r\ell}:\rk{M}=\ell,~i\in \cN(M)\}.
\end{align*}
According to $d(M)=k+1$, the repair I/O corresponding to the repair matrix $M$ is
\begin{equation*}
    \tO(M) = \sum_{j\in [n]}\nz{MH_j}-2 = k + |\cN(M)|.
\end{equation*}

For each node $C_i$, let $\cN_i$ be the minimum set in $\{\cN(M):M\in\cM'_i\}$ under the total order $\prec$. Let $M'_i\in\cM'_i$ be a repair matrix such that $\cN(M'_i)=\cN_i$.

Since $\cR(M)\subseteq \cN(M)$, we have $\cM_i\subseteq\cM'_i$.
Therefore,
\begin{align*}
    \bar\gamma(\cC) & =\sum_{i\in[n]} \min_{M\in\cM_i}\{\tO(M)\}      \\& \geq \sum_{i\in[n]} \min_{M\in\cM'_i}\{\tO(M)\} =\sum_{i\in[n]} \tO(M'_i)
    \\& =k+\sum_{i\in[n]} |\cN_i|.
\end{align*}

\begin{remark}
    Note that, for every node $C_i$, $i\in[n]$, we always have $i\in\cN_i$.
\end{remark}
\begin{lemma}\label{lem:io-SiSj}
    For any two distinct nodes $C_i$ and $C_j, i,j\in[n]$, if $\cN_i\prec \cN_j$, then $j\notin \cN_i$.
\end{lemma}
\begin{proof}
    Suppose that $j\in \cN_i$, we have $M'_i\in \cM'_j$. In either case, this implies that $\cN_j \preceq \cN_i$. Since  $\cN_j$ is minimized under $\prec$. This contradicts the assumption that $\cN_i\prec \cN_j$. Therefore, we conclude that $j\notin \cN_i$.
\end{proof}

Similarly, we assume $\cN_1 \preceq \cN_2 \preceq \dots \preceq \cN_n$ for the rest of this subsection.

\begin{lemma}\label{lem:io-t}
    There exists an integer $i\in[k]$ such that $\cN_i\prec \cN_{k+1}$. Let
    $
        t := \max\{i\in[k]: \cN_i \prec \cN_{k+1}\}.
    $
    Then, $|\cN_{k+1}|\geq k+1-t$.
\end{lemma}

\begin{proof}
    First, we verify the existence of such an integer $i$. Suppose, for the sake of contradiction, that no such $i$ exists. This would imply $\cN_{1}=\dots=\cN_{k+1}$. Such a condition would result in a repair I/O strictly greater than $2k$, contradicting Lemma~\ref{lem:d=k+1}. Thus, the set $\{i\in[k]: \cN_i \prec \cN_{k+1}\}$ is non-empty, and $t$ is well-defined.

    By the definition of $t$, we have $\cN_{t+1} =\dots=\cN_{k+1}$. Remark~\ref{rem:i in S} implies that 
    $
        \{t+1, \dots, k+1\} \subseteq \cN_{k+1}.
    $
    Consequently, $|\cN_{k+1}| \geq |\{t+1, \dots, k+1\}| = k+1-t$.
\end{proof}

The following lemma demonstrates the structure of the repair matrices for the first $t$ nodes.
\begin{lemma}\label{lem:Fi'}
    For each $i\in[t]$, the repair matrix $M'_{i}$ can be decomposed as
    $$ M'_{i}=\Lambda\begin{bmatrix}
            \vu_i &          \\
                     & \vv_i
        \end{bmatrix},$$
    where $\Lambda\in \Fq^{2\times 2}$ is an invertible matrix.
    Moreover, we have
    \begin{equation*}
        \{\langle\vu_i\rangle,\langle\vv_i\rangle:~i\in[t]\} \subseteq\{\langle \ve_1\rangle,\langle \ve_2\rangle\},
    \end{equation*}
    with $\ve_1=\begin{pmatrix}
            1 & 0
        \end{pmatrix}$ and $\ve_2=\begin{pmatrix}
            0 & 1
        \end{pmatrix}$.
\end{lemma}
\begin{proof}
    Since $\cN_i\preceq \cN_t\prec \cN_{k+1} \preceq \cN_{k+2}$ for each $i\in[t]$, Lemma~\ref{lem:io-SiSj} implies that $k+1,k+2\notin \cN_i$.

    By the definition of $\cN_i$, we have $\rk{M'_i H_j} \leq \nz{M'_i H_j}\leq 1$ for $j \in \{k+1,k+2\}$. On the other hand, since the repair degree is $k+1$, we have $\rk{M'_i H_j} = \nz{M'_i H_j}= 1$ for $j \in \{k+1,k+2\}$.

    Consequently, the sub-matrices $M'_i H_{k+1}$ and $M'_i H_{k+2}$ can be expressed as rank-1 matrices constructed from non-zero row vectors $\vu_i, \vv_i \in \Fq^{1 \times 2}$. Specifically, there exist scalars $a,b,c,d \in \Fq$ such that
    $M'_{i} H_{k+1}=
    \begin{bmatrix}
    a \vu_i \\
    b \vu_i
    \end{bmatrix}$ and $ M'_{i} H_{k+2}=
    \begin{bmatrix}
    c \vv_i \\
    d \vv_i
    \end{bmatrix},$
    where $\vu_i$ and $\vv_i$ are nonzero. Since $\nz{M'_i H_{k+1}}=\nz{M'_i H_{k+2}}=1$, we have
    \begin{equation*}
        \{\langle\vu_i\rangle,\langle\vv_i\rangle:~i\in[t]\} \subseteq\{\langle \ve_1\rangle,\langle \ve_2\rangle\},
    \end{equation*}
    where $\ve_1=\begin{pmatrix}
            1 & 0
        \end{pmatrix}$ and $\ve_2=\begin{pmatrix}
            0 & 1
        \end{pmatrix}$.

    Note that the matrix $[H_{k+1} ~ H_{k+2}]$ forms the identity matrix $I_4$. We can reconstruct $M'_i$ as follows:
    \begin{align*}
    M'_i = M'_i[ H_{k+1} ~ H_{k+2}] = \Lambda
    \begin{bmatrix}
      \vu_i &  \\
       & \vv_i
    \end{bmatrix},
     \text{~where~} \Lambda= \begin{bmatrix} a & c \\ b & d \end{bmatrix}.
    \end{align*}
    
    Finally, since $M'_i$ has rank $2$ and the block-diagonal matrix containing $\vu_i, \vv_i$ has rank $2$, the matrix $\Lambda$ must also be invertible.
\end{proof}
      
    So there are only four choices of $(\langle \vu_i \rangle, \langle \vv_i \rangle)$ for all $i \in [t]$: $(\langle \ve_1 \rangle, \langle \ve_1 \rangle)$ ,$(\langle \ve_1 \rangle, \langle \ve_2 \rangle)$ ,$(\langle \ve_2 \rangle, \langle \ve_1 \rangle)$ ,$(\langle \ve_2 \rangle, \langle \ve_2 \rangle)$.

Since the two repair matrices $M'_{i}$ and $\Lambda^{-1}
    M'_{i}$ behave the same during the repair
process, we always assume that $$M'_{i}=\begin{bmatrix}
        \vu_i &          \\
                 & \vv_i
    \end{bmatrix} \text{~for~all~} i\in[t].$$

\begin{lemma}\label{lem:io-ui-vi}
    For any $i, j \in [t]$, 
    \begin{enumerate}
        \item if $\langle \vu_i \rangle = \langle \vu_j \rangle$ and $\langle \vv_i \rangle = \langle \vv_j \rangle$, then $\cN_i = \cN_j$;
       \item if $\langle \vu_i \rangle=\langle \vu_j \rangle$, $\langle \vv_i
                  \rangle \neq \langle \vv_j \rangle$ or $\langle \vu_i \rangle \neq
                  \langle \vu_j \rangle$, $\langle \vv_i \rangle=\langle \vv_j \rangle$,
              then $ \cN_i\cup \cN_j=[k]$.
    \end{enumerate}
\end{lemma}
\begin{proof}
    We only prove the second statement. Suppose for the sake of contradiction that there exists an element $a \in [k]$ such that $a \notin \cN_i \cup \cN_j$. 
    By definition, we have $\nz{M'_i H_a} = \nz{M'_j H_a} = 1$, which means that 
    $$ \RK{\begin{bmatrix} \vu_i \mA_a \\ \vv_i \mB_a \end{bmatrix}} = \RK{\begin{bmatrix} \vu_j \mA_a \\ \vv_j \mB_a \end{bmatrix}} = 1. $$
    This implies that $\langle \vu_i \mA_a \rangle = \langle \vv_i \mB_a \rangle$ and $\langle \vu_j \mA_a \rangle = \langle \vv_j \mB_a \rangle$.  
    
    By symmetry, we may assume without loss of generality that $\langle \vu_i \rangle = \langle \vu_j \rangle$ but $\langle \vv_i \rangle \neq \langle \vv_j \rangle$. The assumption $\langle \vu_i \rangle = \langle \vu_j \rangle$ directly gives $\langle \vu_i \mA_a \rangle = \langle \vu_j \mA_a \rangle$. Consequently, we obtain the following chain of equalities:
    $$ \langle \vv_i \rangle = \langle \vu_i \mA_a \mB_a^{-1} \rangle = \langle \vu_j \mA_a \mB_a^{-1} \rangle = \langle \vv_j \rangle. $$
    This contradicts the premise that $\langle \vv_i \rangle \neq \langle \vv_j \rangle$. A symmetric contradiction arises for the other case. Thus, $\cN_i \cup \cN_j = [k]$.
\end{proof}

Using the above lemmas, we establish the following properties of $\cN_i$ where $i\in[t]$.

\begin{lemma}\label{lem:io-structure result}
    For the sequence of sets $(\cN_i)_{i \in [t]}$, the following hold:
    \begin{enumerate}
        \item The number of distinct sets is bounded by $|\{\cN_i : i \in [t]\}| \leq 3$. Furthermore, if there are exactly $3$ distinct sets, then $t=k$.
        \item For any $a \in [t]$, there exists $b \in [t]$ such that $\cN_a \cup \cN_b = [t]$.    
    \end{enumerate}
\end{lemma}

\begin{proof}
    \begin{enumerate}      
        \item The vectors $\langle \vu_i \rangle$ and $\langle \vv_i \rangle$ are chosen from $\{\langle \ve_1 \rangle, \langle \ve_2 \rangle\}$, yielding at most $4$ possible combinations for the pairs $(\langle \vu_i \rangle, \langle \vv_i \rangle)$. By Lemma~\ref{lem:io-ui-vi} (1), this gives an initial bound of $|\{\cN_i : i \in [t]\}| \leq 4$.
        
        \textbf{Proof of $t=k$ for $3$ distinct sets:} If there are exactly $3$ distinct sets, any subset of $3$ pairs chosen from the $4$ possible configurations must contain two pairs that share exactly one coordinate. By Lemma~\ref{lem:io-ui-vi} (2), the union of these two sets is $[k]$. Since their union must be contained within $[t]$ and we are given $t \leq k$, this strict inclusion forces $t=k$.
        
        \textbf{Proof of the bound $|\{\cN_i: i \in [t]\}| \leq 3$:} Suppose, for contradiction, that exactly $4$ distinct sets exist, meaning all four possible coordinate pairs are realized. Let these four sets be denoted by $\cS_1 \prec \cS_2 \prec \cS_3 \prec \cS_4 = \cN_t$.
        Observe that among the first three strictly non-maximal sets $\{\cS_1, \cS_2, \cS_3\}$, there must exist a pair $\cS_x$ and $\cS_y$ (with $x < y \leq 3$) whose corresponding coordinate pairs share exactly one dimension. By Lemma~\ref{lem:io-ui-vi} (2), their union is $\cS_x \cup \cS_y = [k]$. 
        Since having $\ge 3$ distinct sets implies $t=k$ (as proven above), we have $\cS_x \cup \cS_y = [t]$. However, by Lemma~\ref{lem:io-SiSj}, the relation $\cS_x \preceq \cS_y\prec \cS_4$ imply a contradiction that $t \notin \cS_x\cup \cS_y=[t]$. Consequently, we conclude $|\{\cN_i : i \in [t]\}| \leq 3$.
        
        \item Fix an element $a \in [t]$. We consider two cases for the remaining elements:
        
        If there exists $b \in [t]$ such that exactly one of the identities $\langle \vu_b \rangle = \langle \vu_a \rangle$ or $\langle \vv_b \rangle = \langle \vv_a \rangle$ holds, then by Lemma~\ref{lem:io-ui-vi} (2), $\cN_a \cup \cN_b = [k]$. This implies $t=k$ as shown in Part 1. Thus, $\cN_a \cup \cN_b = [t]$, and the proof is complete.
        
        Otherwise, for all $b \in [t]$, the pairs $(\langle \vu_b \rangle, \langle \vv_b \rangle)$ must either match $(\langle \vu_a \rangle, \langle \vv_a \rangle)$ completely, or differ in both coordinates. This restricts the set of all pairs $\{(\langle \vu_i \rangle, \langle \vv_i \rangle) : i \in [t]\}$ to at most two distinct configurations. By Lemma~\ref{lem:io-ui-vi} (1), this implies $|\{\cN_i : i \in [t]\}| \leq 2$. Since $\bigcup_{i \in [t]} \cN_i = [t]$, we can always find some $b \in [t]$ such that $\cN_a \cup \cN_b = [t]$.
    \end{enumerate}
\end{proof}

The proof of the lower bound on the avg-min repair I/O in Theorem~\ref{thm:avg-min-io} is as follows.
\begin{proof}[Proof of Theorem~\ref{thm:avg-min-io}]
    Recall that the average metric is bounded by
    $$ \bar\gamma(\cC) \geq \frac{1}{n} \sum_{i\in [n]} \tO(M'_i) = \frac{1}{n} \sum_{i\in [n]} |\cN_i| + k. $$  
    Given the ordering $|\cN_{k+1}| \leq |\cN_n|$ and the uniform trailing sets $\cN_{t+1} = \dots = \cN_{k+1}$, we can partition and bound the total sum as:
    $$ \sum_{i\in[n]}|\cN_i| \geq \sum_{i\in[t]}|\cN_i| + (k+2-t)|\cN_{k+1}|. $$

    According to Lemma~\ref{lem:io-structure result}, the number of distinct sets within the prefix is strictly bounded by $|\{\cN_i: i \in [t]\}| \leq 3$. We will analyze the partial sum $\sum_{i\in[t]}|\cN_i|$ by exhaustively discussing the three possible cases.

    \vspace{0.5em}
    \noindent\textbf{Case 1:} $|\{\cN_i : i \in [t]\}| = 1$. 
    This implies total uniformity: $\cN_1 = \dots = \cN_t$. Since every index satisfies $i \in \cN_i$, the single unique set must contain all $i \in [t]$, meaning its size is at least $t$. Thus, 
    $$ \sum_{i\in[t]}|\cN_i| \geq t \cdot t = t^2. $$

    \noindent\textbf{Case 2:} $|\{\cN_i : i \in [t]\}| = 2$. 
    Assume the two distinct sets are $\cS_1 \prec \cS_2$. We partition the index set $[t]$ into $\cT_1 = \{i \in [t] : \cN_i = \cS_1\}$ and $\cT_2 = \{i \in [t] : \cN_i = \cS_2\}$. Since $i \in \cN_i$, it follows naturally that $\cT_1 \subseteq \cS_1$ and $\cT_2 \subseteq \cS_2$, with $|\cT_1| + |\cT_2| = t$. 
    Applying the Cauchy-Schwarz inequality, we compute:
    \begin{align*}
        \sum_{i\in[t]}|\cN_i| &= |\cT_1||\cS_1| + |\cT_2||\cS_2| \\
        &\geq |\cT_1|^2 + |\cT_2|^2 \\
        &\geq \frac{(|\cT_1| + |\cT_2|)^2}{2} = \frac{t^2}{2}.
    \end{align*}

    \noindent\textbf{Case 3:} $|\{\cN_i : i \in [t]\}| = 3$. 
    By the property (1) in Lemma~\ref{lem:io-structure result}, having exactly 3 distinct sets forces $t = k$. 
    Let the sets be strictly ordered as $\cS_1 \prec \cS_2 \prec \cS_3$. We partition the index space $[k]$ into $\cT_j = \{i \in [k] : \cN_i = \cS_j\}$ for $j \in \{1, 2, 3\}$. This guarantees $\cT_j \subseteq \cS_j$ and $\sum_{j=1}^3 |\cT_j| = k$. 
    
    By Lemma~\ref{lem:io-SiSj}, the strictly smaller set $\cS_1$ cannot contain indices belonging to the larger sets, yielding $(\cT_2 \cup \cT_3) \cap \cS_1 = \emptyset$. Since $\cT_1 \subseteq \cS_1$ and the partitions form $[k]$, this completely restricts $\cS_1 = \cT_1$. 
    Furthermore, Lemma~\ref{lem:io-structure result} dictates the maximal coverages $\cS_1 \cup \cS_3 = [k]$ and $\cS_2 \cup \cS_3 = [k]$, which implies $[k] \setminus \cS_3 \subseteq \cS_1 \cap \cS_2$. Because $\cS_1 = \cT_1$ and the index partitions are mutually disjoint, we deduce $([k] \setminus \cS_3) \cap \cT_2 \subseteq \cT_1 \cap \cT_2 = \emptyset$. 
    Therefore, utilizing $\cT_2 \subseteq \cS_2$, we have:
    \begin{equation*}
        |\cS_2| \geq |([k] \setminus \cS_3) \cup \cT_2| = |[k] \setminus \cS_3| + |\cT_2|.
    \end{equation*}

    Let $c = |[k] \setminus \cS_3|$. The inclusion $[k] \setminus \cS_3 \subseteq \cS_1 = \cT_1$ naturally bounds $c$ within $0 \leq c \leq |\cT_1|$. Noting that $|\cS_3| = k - c$, we can expand the sum:
    \begin{align*}
        \sum_{i\in[k]}|\cN_i| &= |\cT_1||\cS_1| + |\cT_2||\cS_2| + |\cT_3||\cS_3| \\
        &\geq |\cT_1|^2 + |\cT_2|(|\cT_2| + c) + |\cT_3|(k - c) \\
        &= |\cT_1|^2 + |\cT_2|^2 + |\cT_3|k + (|\cT_2| - |\cT_3|)c.
    \end{align*}
    Notice that the expression is a linear function with respect to $c \in [0, |\cT_1|]$. Its minimum must therefore occur at one of the boundary points, $c = 0$ or $c = |\cT_1|$. Checking both extremes yields:
    \begin{align*}
        \sum_{i\in[k]}|\cN_i| &\geq \min \Big\{ |\cT_1|^2 + |\cT_2|^2 + |\cT_3|k, \; |\cT_1|^2 + |\cT_3|^2 + |\cT_2|k \Big\} \\
        &\geq \min \left\{ \frac{(|\cT_1| + |\cT_2|)^2}{2} + |\cT_3|k, \; \frac{(|\cT_1| + |\cT_3|)^2}{2} + |\cT_2|k \right\}.
    \end{align*}
    By substituting $|\cT_1| + |\cT_2| = k - |\cT_3|$ and $|\cT_1| + |\cT_3| = k - |\cT_2|$, the expression beautifully simplifies to:
    \begin{align*}
        \sum_{i\in[k]}|\cN_i| &\geq \min \left\{ \frac{k^2 + |\cT_3|^2}{2}, \; \frac{k^2 + |\cT_2|^2}{2} \right\} \geq \frac{k^2 + 1}{2}.
    \end{align*}

    \vspace{0.5em}
    Synthesizing the discussions from Cases 1, 2, and 3, we establish the universal lower bound $\sum_{i\in [t]}|\cN_i| \geq t^2/2$. 
    Given the size properties $|\cN_{k+1}| \geq |\cN_t| \geq t/2$ and $|\cN_{k+1}| \geq k + 1 - t$, we can lower-bound the total sum:
    \begin{align*}
        \sum_{i\in[n]}|\cN_i| &\geq \sum_{i\in[t]}|\cN_i| + (k+2-t)|\cN_{k+1}| \\
        &\geq \frac{t^2}{2} + (k+2-t)\max\left\{ \frac{t}{2}, k+1-t \right\} \\
        &\geq \frac{(k+1)(k+2)}{3}.
    \end{align*}
    Consequently, dividing by $n$ and adding $k$, we obtain the final bound for the average metric:
    $$ \bar\gamma(\cC) = \frac{1}{n}\sum_{ i\in [n]}|\cN_i| + k \geq \frac{4k+1}{3}. $$
\end{proof}
% !TEX root = ../main.tex

\section{\texorpdfstring{$(k+2,k,2)$}{} MDS Array Codes with Optimal Repair}\label{sec:construction}

We construct two classes of explicit $(n = k +2, k, \ell = 2)$ MDS array codes. The first class of MDS array codes achieves the lower bounds on the avg-min repair bandwidth in Theorem~\ref{thm:avg-min-bw} and max-min repair bandwidth in Corollary~\ref{coro:max-min-bw} asymptotically. The second class of MDS array codes achieves the lower bounds on the avg-min repair I/O in Theorem~\ref{thm:avg-min-io} and the max-min repair I/O in Corollary~\ref{coro:max-min-io}. These two classes of codes show that all the proposed lower bounds are tight.

Recall that an $(n = k+2, k, \ell = 2)$ MDS array code comprises $n$ nodes, denoted by $C_1, C_2, \dots, C_n$, where each node is a column vector of length $\ell = 2$. An MDS array code can be defined by the parity check sub-matrices $H_i$, $i\in [n]$, and the parity-check equations~\eqref{eq:pc} as described in Section~\ref{sec:linear-repair}. The defined code is MDS if and only if any two parity check sub-matrices $H_i, H_j$ (where $1\le i < j \le n$) can form an invertible matrix. The linear repair process for each node $C_i$ can be described by a $2\times 4$ repair matrix $M_i$ (where $i\in [n]$). The repair bandwidth and repair I/O for each node $C_i$ can be calculated using the formulas~\eqref{eq:BW} and \eqref{eq:IO}, respectively.

\subsection{\texorpdfstring{$(k+2,k,2)$}{} MDS Array Codes with Optimal Bandwidth}

\noindent\textbf{Code construction ($\cC_1$).}
Let $\Fq$ be the finite field of size $q\ge n+3$, and $\alpha$ a primitive element of $\Fq$.
Then, for each $i, 0\le i\le n+2$, we define an element $\lambda_i$ and the corresponding Vandermonde column vector as
\[
    \lambda_i = \alpha^i \text{~and~} \vlmd_i=[1, \lambda_i]^T.
\]
Any two distinct $\vlmd_i$ and $\vlmd_j$ (where $0\le i, j\le n +2$) are linearly independent.

We evenly partition the $n$ nodes $C_1, C_2, \dots, C_n$ into four node groups with index sets are denoted by $\sG_1, \sG_2, \sG_3, \sG_4$. Each of the first $n \MOD 4$ groups contains $\ceil{n/4}$ nodes, while each of the remaining groups contains $\floor{n/4}$ nodes. For example, if $n = 6$, then $\sG_1 = \{1, 2\}$, $\sG_2 = \{3, 4\}$, $\sG_3 = \{5\}$, and $\sG_4 = \{6\}$.

We give the parity check sub-matrix $H_i$ and repair matrix $M_i$ for each node $C_i$ (where $1\le i\le n$) in Table~\ref{tab:cons-bw}. The following lemma shows that the resulting code $\cC_1$ is MDS.
\begin{table}[ht]
    \centering
    \begin{tabular}{|c|c|c|}
        \hline
        $i$      & $M_i$  & $H_i$                                                                                                                                                                 \\\hline &&\\
        $i\in \sG_1$    & $\begin{bmatrix} {\ \ }1 &{\ \ }0 &{\ \ }0 &{\ \ }0 \\ {\ \ }0 &{\ \ }1 &{\ \ }0 &{\ \ }0\end{bmatrix}$ & $\begin{bmatrix}\vlmd_{i-1} &-\vlmd_i\\ \bm 0 &{\ \ }\vlmd_i \end{bmatrix}$ 
        \\&&\\\hline&&\\
        $i\in \sG_2$  & $\begin{bmatrix} {\ \ }0 &{\ \ }0 &{\ \ }1 &{\ \ }0 \\ {\ \ }0 &{\ \ }0 &{\ \ }0 &{\ \ }1\end{bmatrix}$ & $\begin{bmatrix}{\ \ }\vlmd_{i} &\bm 0\\ -\vlmd_i &\vlmd_{i+1} \end{bmatrix}$ 
        \\&&\\\hline&&\\
        $i\in \sG_3$  & $\begin{bmatrix} {\ \ }1 &{\ \ }0 &{\ \ }1 &{\ \ }0 \\ {\ \ }0 &{\ \ }1 &{\ \ }0 &{\ \ }1\end{bmatrix}$ & $\begin{bmatrix}{\ \ }\vlmd_{i} &\bm 0\\ \bm 0 &\vlmd_{i+2} \end{bmatrix}$
        \\&&\\\hline&&\\
        $i\in \sG_4$   & $\begin{bmatrix} -1 &{\ \ }0 &{\ \ }\alpha &{\ \ }0 \\ {\ \ }0 &{\ \ }\alpha &{\ \ }0 &-1\end{bmatrix}$ & $\begin{bmatrix}\vlmd_{i+2} &\bm 0\\ \bm 0 &\vlmd_{i+2} \end{bmatrix}$
        \\&&\\\hline
    \end{tabular}
    \caption{Repair matrix $M_i$ and parity check sub-matrix $H_i$ for each node $C_i$ (where $1\le i\le n$) of code $\cC_1$.}
    \label{tab:cons-bw}
\end{table}
\begin{table}[ht]
    \centering
    \begin{tabular}{|c|c|c|c|c|}
        \hline
        $\rk{M_iH_j}$ & $j\in \sG_1$ & $j\in \sG_2$ & $j\in \sG_3$ & $j\in \sG_4$ \\\hline
        $i\in \sG_1$    & 2          & 1          & 1          & 1          \\\hline
        $i\in \sG_2$    & 1          & 2          & 1          & 1          \\\hline
        $i\in \sG_3$    & 1          & 1          & 2          & 1          \\\hline
        $i\in \sG_4$    & 1          & 1          & 1          & 2          \\\hline
    \end{tabular}
    \caption{The value of $\rk{M_iH_j}$ for each $i,j\in [n]$ of code $\cC_1$.}
    \label{tab:cons-bw-rank}
\end{table}

\begin{lemma}\label{Lem:MDS-C1}
    The code $\cC_1$ defined by the parity check sub-matrices in Table~\ref{tab:cons-bw} is a $(k+2, k, 2)$ MDS array code.
\end{lemma}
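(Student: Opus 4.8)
The plan is to verify the MDS property straight from the characterization in Section~\ref{sec:linear-repair}: a linear array code with parity-check blocks $\mH_1,\dots,\mH_n$ (each $4\times 2$) is $(k+2,k,2)$ MDS precisely when, for every pair of distinct indices $i,j\in[n]$, the $4\times 4$ matrix $[\mH_i\smat\mH_j]$ is invertible. Since $\vlmd_a=[1,\lambda_a]^{T}$ and, as noted in the construction of $\cC_1$, any two distinct $\vlmd_a,\vlmd_b$ with $0\le a,b\le n+2$ are linearly independent, the $2\times 2$ determinant $\det[\vlmd_a\smat\vlmd_b]=\lambda_b-\lambda_a$ is nonzero whenever $a\ne b$. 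The whole argument consists of reducing each $\det[\mH_i\smat\mH_j]$ to a product of two such factors.

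To do this systematically I would first record that every block of Table~\ref{tab:cons-bw} factors as $\mH_i=\mT_i\!\left[\begin{smallmatrix}\vlmd_{a_i} & \bm 0\\ \bm 0 & \vlmd_{b_i}\end{smallmatrix}\right]$, where $\mT_i$ is $\left[\begin{smallmatrix}\mI_2 & -\mI_2\\ \bm 0 & \mI_2\end{smallmatrix}\right]$ for $i\in\sG_1$, is $\left[\begin{smallmatrix}\mI_2 & \bm 0\\ -\mI_2 & \mI_2\end{smallmatrix}\right]$ for $i\in\sG_2$, and is $\mI_4$ for $i\in\sG_3\cup\sG_4$ (so $\det\mT_i=1$ in every case), while $(a_i,b_i)$ equals $(i-1,i)$, $(i,i+1)$, $(i,i+2)$, $(i+2,i+2)$ respectively for $i\in\sG_1,\sG_2,\sG_3,\sG_4$; in particular every $\mH_i$ has rank $2$. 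Setting $\mN=\mT_i^{-1}\mT_j$, this gives $\det[\mH_i\smat\mH_j]=\det[\,D_i\smat\mN D_j\,]$ with $D_i=\left[\begin{smallmatrix}\vlmd_{a_i} & \bm 0\\ \bm 0 & \vlmd_{b_i}\end{smallmatrix}\right]$, and since $\mT_i,\mT_j$ range over only three matrices, $\mN$ is one of a short explicit list of block matrices. For each of the ten group-pairs (four within a single group, six across two groups), elementary column operations put $[\,D_i\smat\mN D_j\,]$ into block-triangular form $\left[\begin{smallmatrix}P & \bm 0\\ \ast & R\end{smallmatrix}\right]$ with $P$ and $R$ each of the shape $[\vlmd_x\smat\pm\vlmd_{x'}]$, whence $\det[\mH_i\smat\mH_j]=\pm(\lambda_x-\lambda_{x'})(\lambda_y-\lambda_{y'})$ for indices $x,x',y,y'$ depending only on the two groups. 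Equivalently one can just expand each of the ten $4\times 4$ determinants along a single column and reach the same products.

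The remaining task is to check, group-pair by group-pair, that the two index pairs so obtained are genuinely distinct and lie in $\{0,1,\dots,n+2\}$. Assuming $i<j$, as we may, this always reduces to $i\ne j$ together with the contiguity of the partition ($\sG_1$ before $\sG_2$ before $\sG_3$ before $\sG_4$), which makes $i$'s group precede (or equal) $j$'s group. For example, the $(\sG_1,\sG_4)$ pair yields $\det[\mH_i\smat\mH_j]=\pm(\lambda_{i-1}-\lambda_{j+2})(\lambda_i-\lambda_{j+2})$, nonzero because $i<j$ forces $i-1<j+2$ and $i<j+2$; the $(\sG_1,\sG_1)$ pair yields $\pm(\lambda_{i-1}-\lambda_{j-1})(\lambda_i-\lambda_j)$, nonzero since $i\ne j$; the $(\sG_2,\sG_3)$ pair yields $\pm(\lambda_j-\lambda_i)(\lambda_{j+2}-\lambda_{i+1})$, nonzero because $i<j$ gives $j\ne i$ and $j+2\ne i+1$; and the remaining seven are handled the same way. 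The subscript range is never violated, since the smallest subscript used is $i-1\ge 0$ (for $i\in\sG_1$) and the largest is $j+2\le n+2$ (for $j\in\sG_4$).

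The main obstacle here is organizational rather than conceptual: one must push the determinant reduction cleanly through all ten group-pairs and then verify the ten accompanying index inequalities. The one genuinely delicate spot is the repeated-Vandermonde block of $\sG_4$, both of whose columns use $\vlmd_{i+2}$ — one has to confirm that this repetition never collapses a factor, that is, never forces $x=x'$ or $y=y'$ in some determinant. It does not, precisely because $\sG_4$ is assigned the subscript $i+2$, which together with $i<j$ keeps two steps of clearance from the subscripts that $\sG_1,\sG_2,\sG_3$ use on the index ranges where they can overlap $\sG_4$, so that all the index pairs that arise remain distinct.
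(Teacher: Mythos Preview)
Your proposal is correct and follows essentially the same approach as the paper: both reduce each $[\mH_i\smat\mH_j]$ to block-triangular form whose diagonal blocks are $2\times 2$ Vandermonde matrices $[\vlmd_x\smat\vlmd_y]$ with $x\ne y$. Your factorization $\mH_i=\mT_i D_i$ is just a uniform way of packaging the row operation that the paper performs by hand in its worked $(\sG_1,\sG_2)$ example (left-multiplying by $\mT_i^{-1}$ is exactly ``add the second block row to the first'' when $i\in\sG_1$), and your ten-case index-inequality check is a more explicit version of what the paper leaves to the reader.
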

\begin{proof}
    We only need to show that each square matrix $[H_i, H_j]$ is invertible for all distinct indices $i, j$ (where $1\le i<j\le n$).

    One key observation for the structure of the parity check sub-matrices in Table~\ref{tab:cons-bw} is that any square matrix $[H_i, H_j]$ can be transformed into a block triangular matrix by an invertible transformation, and the blocks on the main block diagonal are invertible, which implies that the square matrix $[H_i, H_j]$ is also invertible.

    If $i\in \sG_1$ and $j\in \sG_2$, we can add the second block row of $[H_i, H_j]$ to the first block row, and then permute the columns of the matrix to obtain a block triangular matrix. The procedure is as follows:
    \begin{equation*}
        \begin{aligned}
             & \left[\begin{array}{cc|cc}
                             \vlmd_{i-1} & -\vlmd_i      & {\ \ }\vlmd_j & \bm 0   \\
                             \bm 0   & {\ \ }\vlmd_i & -\vlmd_j      & \vlmd_{j+1}
                         \end{array}\right]           \\
            \xrightarrow{\text{row transformation}}
             & \left[\begin{array}{cccc}
                             \vlmd_{i-1} & {\ \ }\bm 0 & {\ \ }\bm 0 & \vlmd_{j+1} \\
                             \bm 0   & {\ \ }\vlmd_i   & -\vlmd_j        & \vlmd_{j+1}
                         \end{array}\right]       \\
            \xrightarrow{\text{column permutation}}
             & \left[\begin{array}{cc|cc}
                             \vlmd_{i-1} & \vlmd_{j+1} & {\ }\bm 0 & {\ \ }\bm 0 \\\hline
                             \bm 0   & \vlmd_{j+1} & {\ }\vlmd_i   & -\vlmd_{j}
                         \end{array}\right].
        \end{aligned}
    \end{equation*}
    Since the square matrix on the main block diagonal is invertible, the original matrix $[H_i, H_j]$ is also invertible. For other cases, we can directly permute the columns of the matrix to obtain a block triangular matrix.
\end{proof}

\noindent\textbf{Repair bandwidth.}
Recall the linear repair process of each $C_i$ driven by the repair matrix $M_i$ (where $i\in [n]$) described in Section~\ref{sec:linear-repair}. We can calculate the repair bandwidth for  $C_i$ by the formulas~\eqref{eq:BW}.

Firstly, one can directly check from Table~\ref{tab:cons-bw} that for each $i\in [n]$, $\rk{M_iH_i} = 2$. Thus, $M_i$ can define a linear repair process for node $C_i$. Then we summarize the values of $\rk{M_iH_j}$ for all $i,j\in [n]$ in Table~\ref{tab:cons-bw-rank}. Now we can calculate the repair bandwidth for each node $C_i$ (where $i\in [n]$) by the formula~\eqref{eq:BW}:
\begin{align*}
    \tB(M_i) & = \sum_{j=1}^n \rk{M_iH_j} - 2 \\
             & = 2|\sG_z| + (n-|\sG_z|) - 2       \\ & = k + |\sG_z|,
\end{align*}
where $i\in \sG_z$.
According to the group partition, we know that $|\sG_z| = \ceil{n/4}$ or $|\sG_z| = \floor{n/4}$. The max-min repair bandwidth for code $\cC_1$ is $$\beta(\cC_1) = k+\ceil{n/4} = \ceil{\frac{5k+2}{4}},$$ and the avg-min repair bandwidth for code $\cC_1$ is $$\bar{\beta}(\cC_1) = k + \floor{n/4} + \frac{1}{n}(n \MOD 4)\ceil{n/4}.$$ We can conclude that the max-min repair bandwidth of code $\cC_1$ achieves the lower bound in Corollary~\ref{coro:max-min-bw} when $k \MOD 4 = 1$ or $k \MOD 4 = 2$. In addition, the avg-min repair bandwidth of code $\cC_1$ achieves the lower bound in Theorem~\ref{thm:avg-min-bw} when $k$ goes to infinity.

\subsection{\texorpdfstring{$(k+2,k,2)$}{} MDS Array Codes with Optimal Repair I/O}

\noindent\textbf{Code construction ($\cC_2$).}
Let $\Fq$ be a finite field of size $q\ge n+1$, and $\lambda_0, \lambda_1, \dots, \lambda_{n}$ be $n+1$ distinct elements in $\Fq$. We also denote a Vandermonde column vector $\vlmd_i = [1, \lambda_i]^T$ for each $\lambda_i$ (where $0\le i\le n$).

We evenly partition the $n$ nodes $C_1, C_2, \dots, C_n$ into three node groups with index sets denoted by $\sG_1, \sG_2, \sG_3$. Each of the first $n \MOD 3$ groups contains $\ceil{n/3}$ nodes, while each of the remaining groups contains $\floor{n/3}$ nodes. For example, if $n = 8$, then $\sG_1 = \{1, 2, 3\}$, $\sG_2 = \{4, 5, 6\}$, $\sG_3 = \{7, 8\}$.

We give the parity check sub-matrix $H_i$ and repair matrix $M_i$ for each node $C_i$ (where $1\le i\le n$) in Table~\ref{tab:cons-io}. Lemma~\ref{Lem:MDS-C2} shows that the resulting code $\cC_2$ is MDS. We omit the proof of this lemma since it is similar to the proof of Lemma~\ref{Lem:MDS-C1}.

\begin{table}[ht]
    \centering
    \begin{tabular}{|c|c|c|}
        \hline
        $i$     & $M_i$     & $H_i$                                                                                                                                                               \\\hline &&\\
        $i\in \sG_1$   & $\begin{bmatrix} {\ \ }1 &{\ \ }0 &{\ \ }0 &{\ \ }0 \\ {\ \ }0 &{\ \ }1 &{\ \ }0 &{\ \ }0\end{bmatrix}$ & $\begin{bmatrix}\vlmd_{i-1} &-\vlmd_i\\ \bm 0 &{\ \ }\vlmd_i \end{bmatrix}$  
        \\&&\\\hline&&\\
        $i\in \sG_2$   & $\begin{bmatrix} {\ \ }0 &{\ \ }0 &{\ \ }1 &{\ \ }0 \\ {\ \ }0 &{\ \ }0 &{\ \ }0 &{\ \ }1\end{bmatrix}$ & $\begin{bmatrix}{\ \ }\vlmd_{i} &\bm 0\\ -\vlmd_i &\vlmd_{i-1} \end{bmatrix}$
        \\&&\\\hline&&\\
        $i\in \sG_3$  & $\begin{bmatrix} {\ \ }1 &{\ \ }0 &{\ \ }1 &{\ \ }0 \\ {\ \ }0 &{\ \ }1 &{\ \ }0 &{\ \ }1\end{bmatrix}$ & $\begin{bmatrix}{\ \ }\vlmd_{i} &\bm 0\\ \bm 0 &\vlmd_{i+1} \end{bmatrix}$
        \\&&\\\hline
    \end{tabular}
    \caption{Repair matrix $M_i$ and parity check sub-matrix $H_i$ for each node $\cC_i$ (where $1\le i\le n$) of code $\cC_2$.}
    \label{tab:cons-io}
\end{table}
\begin{table}[ht]
    \centering
    \begin{tabular}{|c|c|c|c|}
        \hline
        $\nz{M_iH_j}$ & $j\in \sG_1$ & $j\in \sG_2$ & $j\in \sG_3$ \\\hline
        $i\in \sG_1$    & 2          & 1          & 1          \\\hline
        $i\in \sG_2$    & 1          & 2          & 1          \\\hline
        $i\in \sG_3$    & 1          & 1          & 2          \\\hline
    \end{tabular}
    \caption{The value of $\nz{M_iH_j}$ for each $i,j\in [n]$ of code $\cC_2$.}
    \label{tab:cons-io-nz}
\end{table}

\begin{lemma}\label{Lem:MDS-C2}
    The code $\cC_2$ defined by the parity check sub-matrices in Table~\ref{tab:cons-io} is a $(k+2, k, 2)$ MDS array code.
\end{lemma}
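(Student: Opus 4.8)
The plan is to mirror the proof of Lemma~\ref{Lem:MDS-C1}: show that for every pair of distinct indices $i,j$ with $1\le i<j\le n$, the $4\times 4$ matrix $[\mH_i,\mH_j]$ is invertible, using row operations and column permutations to bring it to block-triangular form with invertible diagonal blocks. The essential fact throughout is that any two distinct Vandermonde vectors $\vlmd_a=[1,\lambda_a]^T$ and $\vlmd_b=[1,\lambda_b]^T$ (with the $\lambda$'s pairwise distinct) are linearly independent, so any $2\times 2$ matrix $[\vlmd_a,\vlmd_b]$ or $[\vlmd_a,-\vlmd_b]$ is invertible.

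I would organize the argument by which groups $i$ and $j$ fall into, giving six cases: $(\sG_1,\sG_1)$, $(\sG_2,\sG_2)$, $(\sG_3,\sG_3)$, $(\sG_1,\sG_2)$, $(\sG_1,\sG_3)$, and $(\sG_2,\sG_3)$. For $i,j\in\sG_3$ the matrix $[\mH_i,\mH_j]$ is already block-diagonal with blocks $[\vlmd_i,\vlmd_j]$ and $[\vlmd_{i+1},\vlmd_{j+1}]$, both invertible since the indices are distinct. For $i,j\in\sG_1$ one permutes columns to get a block-triangular matrix with diagonal blocks $[\vlmd_{i-1},-\vlmd_j]$ (or $[\vlmd_{i-1},\vlmd_i]$ after grouping) and $[\vlmd_i,\vlmd_j]$; similarly for $i,j\in\sG_2$ using $[\vlmd_i,\vlmd_j]$ and $[\vlmd_{i-1},\vlmd_{j-1}]$. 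The cross-group cases $(\sG_1,\sG_3)$ and $(\sG_2,\sG_3)$ reduce to a column permutation because one of the two blocks already sits in a single block-row, exactly as in the "other cases" remark at the end of the proof of Lemma~\ref{Lem:MDS-C1}. The case $(\sG_1,\sG_2)$ is the one genuinely requiring a row operation: adding the second block-row to the first eliminates the $\vlmd_i$ terms, after which a column permutation yields diagonal blocks $[\vlmd_{i-1},\vlmd_{j-1}]$ and $[\vlmd_i,-\vlmd_j]$ — here one must check $i-1\neq j-1$, i.e. $i\neq j$, which holds, and $i\neq j$ again for the second block.

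The one subtlety to watch is index collisions among the Vandermonde indices that appear: in $\cC_2$ the indices used are $i-1,i$ for $\sG_1$, $i,i-1$ for $\sG_2$, and $i,i+1$ for $\sG_3$, all lying in $\{0,1,\dots,n\}$, so there are $n+1$ distinct $\lambda$'s available and every diagonal block that arises pairs two genuinely distinct indices — this needs a quick verification in each case (e.g. in the $(\sG_1,\sG_2)$ case one checks that $i-1=j-1$ is impossible and $i=j$ is impossible). I expect this bookkeeping to be the main (and only) obstacle, and it is routine; since the paper explicitly says the proof is "similar to the proof of Lemma~\ref{Lem:MDS-C1}," I would present it compactly, spelling out only the $(\sG_1,\sG_2)$ reduction in full and remarking that the remaining cases follow by a column permutation alone or are already block-triangular.

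\begin{proof}[Proof sketch]
    As in Lemma~\ref{Lem:MDS-C1}, it suffices to verify that $[\mH_i,\mH_j]$ is invertible for all $1\le i<j\le n$, which we do by reducing to block-triangular form with invertible diagonal blocks; throughout we use that distinct Vandermonde vectors $\vlmd_a,\vlmd_b$ are linearly independent. If $i,j$ lie in the same group, or if exactly one of them lies in $\sG_3$, a column permutation alone brings $[\mH_i,\mH_j]$ to block-triangular form with diagonal blocks of the shape $[\vlmd_a,\pm\vlmd_b]$ for distinct indices $a,b\in\{0,\dots,n\}$. The only case needing a row operation is $i\in\sG_1$, $j\in\sG_2$: adding the second block-row to the first clears the $\vlmd_i$ entries, and a subsequent column permutation yields diagonal blocks $[\vlmd_{i-1},\vlmd_{j-1}]$ and $[\vlmd_i,-\vlmd_j]$, both invertible since $i\ne j$. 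Hence $[\mH_i,\mH_j]$ is invertible in every case, so $\cC_2$ is a $(k+2,k,2)$ MDS array code.
\end{proof}
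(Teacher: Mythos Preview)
Your proposal is correct and follows precisely the approach the paper intends: the paper omits the proof entirely, stating only that it is similar to that of Lemma~\ref{Lem:MDS-C1}, and your case analysis with the row operation for the $(\sG_1,\sG_2)$ case and column permutations elsewhere is exactly that argument carried out for the $\cC_2$ parity-check blocks. One minor remark: the Vandermonde index $i+1$ for $i\in\sG_3$ can reach $n+1$, not $n$ as you wrote, so the required index range is $\{0,\dots,n+1\}$; this is a slip in the paper's field-size statement rather than in your reasoning, and does not affect the invertibility checks.
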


\noindent\textbf{Repair I/O.}
Recall the linear repair process of each $C_i$ driven by the repair matrix $M_i$ (where $i\in [n]$) described in Section~\ref{sec:linear-repair}.
We can calculate the repair I/O for $C_i$ using the formula~\eqref{eq:IO}.

Firstly, from Table \ref{tab:cons-io}, one can directly check that for each $i\in [n]$, $\rk{M_iH_i} = 2$. Thus, $M_i$ can define a linear repair process for node $C_i$. Then we summarize the values of $\nz{M_iH_j}$ for all $i,j\in [n]$ in Table~\ref{tab:cons-io-nz}. Now we can calculate the repair I/O for each node $C_i$ (where $i\in [n]$) using the formula~\eqref{eq:IO}:
\begin{align*}
    \tO(M_i) & = \sum_{j=1}^n \nz{M_iH_j} - 2 \\
             & = 2|\sG_z| + (n-|\sG_z|) - 2       \\ & = k + |\sG_z|,
\end{align*}
where $i\in \sG_z$.
According to the group partition of code $\cC_2$, we know that $|\sG_z| = \ceil{n/3}$ or $|\sG_z| = \floor{n/3}$. The max-min repair I/O for code $\cC_1$ is $$\gamma(\cC_1) = k+\ceil{n/3} = \ceil{\frac{4k+2}{3}},$$ and the avg-min repair I/O for code $\cC_2$ is $$\bar \gamma(\cC_2) = k + \floor{n/3} + \frac{1}{n}(n \MOD 3)\ceil{n/3}.$$ We can conclude that the max-min repair I/O of code $\cC_2$ achieves the lower bound in Corollary~\ref{coro:max-min-io} when $k \MOD 3 = 0$ or $k \MOD 3 = 1$. In addition, the avg-min repair I/O of code $\cC_2$ achieves the lower bound in Theorem~\ref{thm:avg-min-io} when $k$ goes to infinity.

% !TEX root = ../main.tex

\section{Conclusion}\label{sec:conclusion}

We investigate lower bounds on the repair bandwidth and repair I/O for single-node failures in MDS array codes with $r=2$ and $\ell=2$.
Under these parameters, we map repair schemes to point sets on the projective line \(\mathbb{P}^1\). Leveraging the sharply 3-transitive action of \(\text{PGL}_2(\Fq)\), we derive a lower bound on the repair bandwidth and provide explicit MDS array code constructions that meet this bound. Using a similar approach, we also establish a lower bound for repair I/O and provide an optimal construction.

An interesting future research direction is to establish tight lower bounds for the repair bandwidth and repair I/O of the MDS array codes with general parameters $(n,k,\ell)$. Deriving such bounds for specific small sub-packetization levels (e.g., $\ell = 2, 4, 8$) is also valuable, as these values align with practical implementations in distributed storage systems.

\section*{Acknowledgment}

We would like to express our sincere gratitude to Dr. Huawei Wu for providing valuable suggestions during our discussions.
\bibliographystyle{ieeetr}
\bibliography{ref}

\end{document}